\renewcommand{\epsilon}{\varepsilon}
\newcommand{\functiondot}{\,\mathord{\cdot}\,}
\newcommand{\bigO}[1]{\mathcal{O}(#1)}
\newcommand{\setnocond}[1]{\{#1\}}
\newcommand{\setcond}[2]{\{\,#1 : #2\,\}}
\newcommand{\norm}[1]{\|#1\|}
\newcommand{\opnorm}[1]{\norm{#1}_{o}}
\newcommand{\radius}[1]{\varrho(#1)}
\newcommand{\maxmag}[1]{\kappa(#1)}
\newcommand{\epsneigh}[1][\epsilon]{U_{#1}}
\DeclareMathOperator{\modulo}{mod}
\DeclareMathOperator{\lcm}{lcm}
\newcommand{\naturals}{\mathbb{N}}
\newcommand{\reals}{\mathbb{R}}
\renewcommand{\dh}{\ensuremath{\mathcal{D(H)}}}
\newcommand{\sh}{\ensuremath{\mathcal{S(H)}}}
\newcommand{\bh}{\ensuremath{\mathcal{B(H)}}}
\newcommand{\ap}{AP}
\renewcommand{\b}{\ensuremath{\mathcal{B}}}
\renewcommand{\c}{\ensuremath{\mathcal{C}}}
\newcommand{\e}{\ensuremath{\mathcal{E}}}
\newcommand{\f}{\ensuremath{\mathcal{F}}}
\newcommand{\g}{\ensuremath{\mathcal{G}}}
\newcommand{\h}{\ensuremath{\mathcal{H}}}
\renewcommand{\i}{\ensuremath{\mathcal{I}}}
\renewcommand{\l}{\ensuremath{\mathcal{L}}}
\newcommand{\p}{\ensuremath{\mathcal{P}}}
\newcommand{\s}{S}
\newcommand{\bbh}{\ensuremath{\mathcal{B(\bh)}}}
 \newcommand{\trj}{\sigma}%\newcommand{\trj}{\mathit{trj}}
\newcommand{\vp}{\varphi}
\newcommand{\coloneqq}{:=}
\newcommand{\Coloneqq}{::=}
 \newcommand{\tr}{{\rm tr}} %trace
\newcommand{\spec}{\mathit{spec}}
\newcommand{\bra}[1]{\langle #1 |}
\newcommand{\ket}[1]{| #1 \rangle}
\newcommand{\braket}[2]{\langle #1 | #2 \rangle}
\def \trjst {\sigma_s}
\def \trjdy {\sigma_d}
\def \lfst {L_s}
\def \lfdy {L_d}
\def \mdst {\models_s}
\def \mddy {\models_d}
\title{Model Checking Applied to Quantum Physics}
\titlerunning{Model Checking Applied to Quantum Physics}
\author{Ji Guan\inst{1,2}, Yuan Feng\inst{2}, Andrea Turrini\inst{1}\and  Mingsheng Ying\inst{1,2,3} }
\authorrunning{J. Guan et al.}
\institute{
State Key Laboratory of Computer Science, Institute of Software, Chinese Academy of Sciences, Beijing 100190, China\and Center for Quantum Software and Information, University of Technology Sydney, NSW 2007, Australia \and Department of Computer Science and Technology, Tsinghua University, Beijing 100084, China}
\begin{document}
\maketitle
\pagestyle{plain}
\begin{abstract}
 Model checking has been successfully applied to verification of computer hardware and software, communication systems and even biological systems.  
In this paper, we further push the boundary of its applications and show that it can be adapted for applications in quantum physics. More explicitly, we show how quantum statistical and many-body systems can be modeled as quantum Markov chains, and some of their properties that interest physicists can be specified in  linear-time temporal logics. Then we present an efficient algorithm to check these properties. A few case studies are given to demonstrate the use of our algorithm to actual quantum physical problems.   
\end{abstract}

\section{Introduction}

Classical mechanics describes nature at macroscopic scale (far larger than $10^{-9}$ meters), while  quantum mechanics is applied at microscopic scale (near or less than $10^{-9}$ meters). A particle at this level can be mathematically represented by  a normalized complex vector $\ket{\psi}$ in a Hilbert space $\h$. The time evolution of a single particle system is described by the Schr\"{o}dinger equation: 
\begin{equation}
	i\frac{d\ket{\psi_t}}{dt}=H\ket{\psi_t}
\end{equation}
with some \emph{Hamiltonian} $H$ (a Hermitian matrix on $\h$), where $\ket{\psi_t}$ is the state of the system at time $t$. In practice, suffering from noises, the state of a quantum system cannot be completely known. Thus a \emph{density operator} $\rho$ (Hermitian  positive semidefinite matrix with unit trace) on $\h$ is introduced to describe the uncertainty of the possible states: \[\rho=\sum_k p_k\ket{\psi_k}\bra{\psi_k},\]
where $\{(p_k,\ket{\psi_k})\}_k$ is a mixed state or an ensemble expressing that the quantum state is at $\ket{\psi_k}$ with probability $p_k$, and $\bra{\psi_k}$ is the conjugate  transpose of $\ket{\psi_k}$. In this case, the evolution is described by the Lindblad equation:
\begin{equation}\label{Lind}
	\frac{d\rho_t} {dt}=L(\rho_t)
\end{equation}
where $\rho_t$ stands for the (mixed) state of the system, and $L$ is a  linear function of $\rho_t$, which is generally irreversible.

\subsection{Two Model Checking Problems from Quantum Physics}\label{2pro}

Our motivations are two problems from two different fields of quantum physics: 

{\vskip 4pt}

\textbf{Quantum Statistical Mechanics}: 
\emph{Quantum statistical mechanics} is essentially statistical mechanics applied to quantum systems. It is based on the statistical description of measurements~\cite{von2018mathematical}. 
Specifically, through observing state $\rho_t$ of a quantum system at time $t$ with a \emph{quantum measurement} (e.g., position and momentum), which is mathematically modeled by a set $\{M_{k}\}_{k=1}^m$ of matrices on its state Hilbert space $\h$ with $\sum_k M_k^\dagger M_k=I$ (the identity operator on $\h$), the probability of outcome $k$ is 
\[p_k=\tr(M_k^\dagger M_k\rho_t).\]
After observing $k$, the state becomes
\[\rho'_t(k)=\frac{M_k\rho_t M_k^\dagger}{\tr(M_k^\dagger M_k\rho_t)}.\]
 The vital difference to classical statistical mechanics is that the original state $\rho_t$ is collapsed (changed) to $\rho'_t(k)$ after we measure the system, depending on the measurement outcome $k$.
 
Quantum statistical mechanics is mainly concerned with the connections between the classical information (probability distributions of measurement outcomes) and the quantum information (quantum states) of quantum systems. A fundamental problem in the foundations of quantum statistical mechanics is the long-term classical information of quantum systems. It originated from John von Neumann's 1929 paper on the quantum ergodic theorem \cite{neumann1929beweis,goldstein2010long}, which asserts that for an appropriate finite set of mutually commuting measurements, every initial quantum state evolves so that for most time in the long run, the joint probability distribution of these measurements is close to a certain distribution.
A renewed interest in recent years leads to the study of long-term properties of the measurement outcome distribution $(\tr(M_1^\dagger M_1\rho_t),\tr(M_2^\dagger M_2\rho_t),\ldots,\tr(M_n^\dagger M_n\rho_t))$ \cite{rigol2008thermalization,popescu2006entanglement,linden2009quantum}; especially,
 \begin{problem}[\textbf{Long-term classical information}]\label{pro:statistic}
   Let $\{\i_l\}_{l=1}^n$ be a finite set of  intervals in $[0,1]$ and $\{M_k\}_{k=1}^m$ a measurement. Given a multiset $\{l_k: 1\leq l_k \leq n\}_{k=1}^m$, is $\tr(M_k^\dagger M_k\rho_t)$  eventually (respectively, infinitely often) in $\i_{l_k}$ for all $1\leq k\leq m$?
 \end{problem}  
 
\textbf{Quantum Many-body Systems}: A \emph{quantum many-body system} is a complex system of multiple interacting microscopic particles \cite{zeng2015quantum}. The number of particles can be near or more than $10^{20}$ when we consider \emph{thermodynamic limit} (of \emph{quantum condensed matter}) in practice, and the dimension of the state space for the whole system (all particles) is at least $2^{10^{20}}$.
\emph{Quantum many-body problems} are concerned with bulk properties (e.g., superfluidity and superconductivity) of such large systems. 
Obviously, exact or analytical solutions to them are impractical or even impossible. A common approach is to find hypothetical models that capture some essential aspects (e.g., ground states and ground energy) of the real systems, such as \emph{Matrix Product States (MPS)}  and \emph{Tensor Product States (TPS)} in terms of the topological structure of the systems (see Fig.1) \cite{zeng2015quantum}. 
\begin{figure}[htbp]\label{pic}
\centering
\subfigure[]{
\begin{minipage}[t]{0.5\linewidth}
\centering
\includegraphics[width=1.9in,height=1.5in]{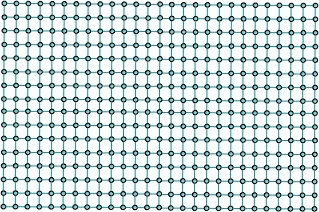}
\end{minipage}%
}%
\subfigure[]{
\begin{minipage}[t]{0.5\linewidth}
\centering
\includegraphics[width=1.9in,height=1.5in]{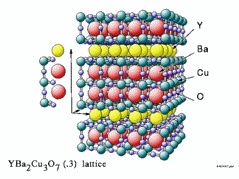}
%\caption{fig2}
\end{minipage}%
}%
\centering
\caption{(a) and (b) are 2- and 3-dimensional quantum many-body systems (lattice), respectively. Each node represents a particle.}
\end{figure}
Let us consider an $1$-dimensional MPS as an example. Assume the system consists of $N$ quantum particles in a line, indexed from $1$ to $N$, and 
each particle has its own $d$-dimensional Hilbert space, denoted by $\h_{d}$. 
Then the entire Hilbert space is $\h = \h_{d}^{\otimes N}$. MPSs have the form:
\begin{eqnarray}\label{model-MPS}
    \ket{\psi_{N}} = \sum_{k_{1},\dotsc,k_{N}=1}^{d} \tr(A_{k_{1}} \cdots A_{k_{N}}) \ket{k_{1}}\otimes \cdots \otimes\ket{k_{N}}
    \label{eq:mps}
\end{eqnarray}
where $\setnocond{\ket{k}}_{k = 1}^{d}$ is an orthonormal basis of $\h_{d}$ and $\setnocond{A_{k}}_{k=1}^{d}$ is a family of $D\times D$ complex matrices with $D$ being independent on $N$. 

For each non-zero $\ket{\psi_{N}} \in \h$, there always exists a \emph{parent Hamiltonian} $H_{N}$ which has $\ket{\psi_{N}}$ as a \emph{ground state} -- the eigenvector of $H_{N}$ corresponding to the smallest eigenvalue. Such an $H_{N}$ can be constructed from a locally translationally invariant Hamiltonian $h$ as $H_{N} = \sum_{k=1}^{N} \tau^{k}(h)$, where $\tau$ is the one-step translation operator and $h$ is the projector onto the orthogonal complement of 
\[
    \left\{\sum_{k_{1}, \dotsc, k_{J}} \tr(XA_{k_{1}} \cdots A_{k_J}) \ket{k_{1}} \otimes \cdots\otimes\ket{k_{J}}\right\}
\]
for some sufficiently large interaction range $J$ -- a positive integer, where $X$ ranges over $D\times D$ complex matrices; for more details, we refer to~\cite{Perez-Garcia2007}.  

When considering MPSs in thermodynamic limit ($N\rightarrow \infty $), we are concerned with a family of parent Hamiltonians $\setnocond{H_{N}}_{N \geq J}$ such that $H_{N} \ket{\psi_{N}} = 0$. 
However, $\ket{\psi_{N}}$ is not a ground state if $\ket{\psi_{N}} = 0$. Therefore, for verifying the validity of matrices $\{A_k\}$ in the hypothetical model (\ref{model-MPS}) of an MPS and its parent Hamiltonian, we need to answer:
\begin{problem}[\textbf{Dichotomy problem}]\label{pro:many_body}
	Given a finite set of $D\times D$ complex matrices $\setnocond{A_{k}}$ and a positive integer $J$. 
	\begin{itemize}
		\item Is $\ket{\psi_{N}} \neq 0$ for all $N \geq J$?
		\item Is there $N_{0} \geq J$ such that $\ket{\psi_{N}} \neq 0$ for all $N \geq N_{0}$?
	\end{itemize}
\end{problem}

\subsection{Contributions of the Paper}
In this paper, we show that quantum statistical and many-body systems can be modeled as \emph{quantum Markov chains (QMCs)} if we are interested in their discrete-time evolutions. Furthermore, we show that the properties considered in  Problems~\ref{pro:statistic} and~\ref{pro:many_body} can be properly specified in a \emph{linear-time temporal logic (LTL)}, and thus these two problems are typical \emph{LTL model checking} problems for QMCs. 

QMCs have been introduced as quantum generalizations of classical Markov chains in several different areas, including quantum control \cite{Ticozzi2008a}, quantum information theory \cite{guan2018structure}, and quantum programming \cite{ying2016foundations}.   
A QMC is defined as a $3$-tuple $\g = (\h, \e, \rho_{0})$, where $\h$ is a finite-dimensional Hilbert space, $\rho_0$ is the initial state, and $\e$ is a \emph{completely positive and trace-preserving} (CPTP) map  (also called a \emph{super-operator}) on $\h$. Intuitively, $\e$ models the system's dynamics and transforms a state (density operator) $\rho$ to another state $\e(\rho)$. It can be understood as a discrete-time solution to the Lindblad equation (\ref{Lind}).   
Moreover, several model checking-related problems for QMCs have been studied, including the long-run behavior and reachability problem; for example, some characterizations of the limiting states of QMCs were given in \cite{wolf2012quantum,guan2018decomposition}; several algorithms for computing the probabilities of reachability, repeated reachability and persistence of QMCs were presented in~\cite{ying2013reachability,Baumgartner2011,guan2018decomposition} based on irreducible and periodic decomposition techniques. However, these results cannot be used to solve Problems~\ref{pro:statistic} and~\ref{pro:many_body}.   

The aim of this paper is to develop an efficient model checking algorithm solving Problems~\ref{pro:statistic} and~\ref{pro:many_body}. 
Our basic idea is inspired by Thiagarajan's \textit{approximate verification} of classical \emph{Markov chains}~\cite{agrawal2015approximate} where concrete \emph{atomic propositions} are used to estimate the actual distribution. 
For the flexibility of applications, we admit \textit{abstract} LTL formulas, which can specify the properties in Problems~\ref{pro:statistic}  and~\ref{pro:many_body}. 
We further give an effective procedure to approximately answer the LTL model checking problem for \emph{periodically stable} QMCs.  
The main technique is based on the \emph{eigenvalue-analysis} of QMCs, which significantly simplifies the previous work based on decompositions of the state space. 

Several case studies are provided in Section \ref{sec:experiments} to illustrate how our model checking algorithm can be applied in quantum  physics.

\section{Quantum Markov Chains}\label{sec:preliminaries}

For the convenience of the reader, we review some basics of QMCs. For more details, we refer the interested reader to~\cite{nielsen2010quantum,guan2018decomposition}.

\subsection{Dynamics of Quantum Systems} 

The state space of a quantum system is a Hilbert space $\h$. In this paper, we always assume that $\h$ is finite-dimensional. Let $\bh$ be the set of linear operators (matrices) on $\h$. 
A density operator is a positive semi-definite operator $\rho\in \bh$  with $\tr(\rho) = 1$, where $\tr(\rho)$ is the trace of $\rho$, i.e., the summation of diagonal elements of $\rho$. 
A \emph{super-operator} $\e$ on $\h$ is a linear operator on $\bh$. It is called \emph{trace-preserving} if $\tr(\e(\rho)) = \tr(\rho)$ for all $\rho$;
it is \emph{completely positive} if for any Hilbert space $\h'$, the trivially extended operator $\mathrm{id}_{\h'} \otimes \e$ maps density operators in $\b(\h' \otimes \h)$ to density operators, where $\otimes$ denotes the tensor product and $\mathrm{id}_{\h'}$ is the identity map on $\b(\h')$. In this paper, we assume all super-operators to be completely positive and trace-preserving (CPTP).

Let $\dh$ and $\sh$ be the sets of density operators and super-operators on $\h$, respectively. 
According to the postulates of quantum mechanics, $\dh$ represents all valid states (density operators) of the system, and $\sh$ models all the possible (discrete-time) dynamics of the system.
 By Kraus representation theorem~\cite{choi1975completely}, for any super-operator $\e$ on $\h$, there exist linear operators $\setcond{E_{k}}{1 \leq k \leq N}$ with $N \leq \dim(\h)^{2}$ and $\sum_{k} E_{k}^{\dag} E_{k} = I$, such that
\[
    \e(A) = \sum_{k=1}^{N} E_{k} A E_{k}^{\dag},
\]
for all $A \in \bh$, where $\dag$ denotes the Hermitian adjoint. 
%Furthermore, if $\sum_{k} E_{k} E_{k}^{\dag} = I$, $\e$ is said to be \emph{unital}. 
%The adjoint super-operator $\e^{\dag}$ of $\e$ is given by the Kraus operators $\setcond{E_{k}^{\dag}}{1 \leq k \leq N}$, which is obviously unital.
In this paper, we sometimes use the Kraus operators to represent a super-operator.
 
\subsection{Quantum Markov Chains}
\label{ssec:qmc}

Recall that a Markov chain (MC) is a tuple $\c = (\s, P, \mu_{0})$, where $\s $ is a finite state set, $P$ a  probability matrix describing the transition probabilities between states, and $\mu_{0}$ a distribution of the initial states. 
Thus, the execution of $\c$ is a set %$\setnocond{s_{0}\,s_{1}\,s_{2}\dotso}$ 
of state paths, each one occurring with a certain probability. 
Alternatively, it can be seen as a single path (of distributions): 
$\trj(\c) = \mu_{0},\mu_{0}P,\mu_{0}P^{2},\dotso$. 

QMCs are a straightforward generalization of MCs. 

\begin{definition}
    A QMC is a tuple $\g = (\h, \e, \rho_{0})$, where $\h$ is a  Hilbert space, $\e$ a super-operator on $\h$, and $\rho_{0} \in \dh$ an initial state. 
\end{definition}

Especially, $\g$ is called \emph{irreducible} if $\e$ has only one full-rank stationary state~\cite{Fagnola2009}; that is, there exists a unique $\rho \in \dh$ such that $\e(\rho) = \rho$, and further $\rho > 0$, i.e., $\rho$ is strictly positive.

The state transitions of $\g$ can be described as the \emph{trajectory}: \begin{equation}\label{q-traj}\trjst(\g) = \rho_{0}, \e(\rho_{0}), \e^{2}(\rho_{0}), \ldots.\end{equation}

Sometimes, quantum states are not our concern in practice. Then a QMC can be defined as a pair $\g=(\h,\e)$ without explicitly specifying the initial state, and its behavior described by the trajectory of super-operators:
 \begin{equation}\label{s-traj}\trjdy(\g) = \mathrm{id}_{\h}, \e, \e^{2}, \ldots.\end{equation}
 
 For more discussions and examples of QMCs, see Appendix \ref{More-QMC}. 

\section{Linear-Time Properties in Quantum Physics}

In this section, we present two \emph{linear-time temporal logics} (LTLs) as languages for specifying properties of quantum physical systems. Our logics are essentially the same as the ordinary LTL except that its atomic propositions are interpreted in quantum physics.  

\subsection{Linear-Time Temporal Logic}
\label{ssec:ltl}

As usual, we assume a finite set $\ap$ of atomic propositions.   
The LTL formulas over $\ap$ are defined by the following syntax (see, e.g.,~\cite{baier2008principles}):
\[
    \vp \Coloneqq  \mathbf{true} \mid a \mid \neg\vp \mid \vp_{1} \lor \vp_{2} \mid \bigcirc\vp \mid \vp_{1} U \vp_{2}
\]
where $a\in \ap$. 
Other standard Boolean operators and temporal modalities like $\lozenge$ (\emph{eventually}) and $\square$ (\emph{always}) can be derived in the usual way.

The semantics of LTL is also defined in a familiar way. For any infinite word $\xi$ over $2^{\ap}$ and for any LTL formula $\vp$ over $\ap$, the satisfaction relation $\xi\models\vp$ is defined by induction on the length of $\vp$:
\begin{itemize}
\item 
    $\xi \models \mathbf{true}$;
\item 
    $\xi \models a$ iff $a \in \xi[0]$;
\item 
    $\xi \models \neg\vp$ iff it is not the case that $\xi \models \vp$ (written $\xi \not\models \vp$);
\item 
    $\xi \models \vp_{1} \lor \vp_{2}$ iff $\xi \models \vp_{1}$ or $\xi \models \vp_{2}$;
\item 
    $\xi \models \bigcirc\vp$ iff $\xi[1+] \models \vp$;
\item 
    $\xi \models \vp_{1} U \vp_{2}$ iff there exists $k \geq 0$ such that $\xi[k+] \models \vp_{2}$ and for each $0 \leq j < k$, $\xi[j+] \models \vp_{1}$.
\end{itemize}
Here   $\xi[k]$ and $\xi[k+]$ denote the $(k+1)$-th element and $(k+1)$-th suffix of $\xi$, respectively. 
The indexes start from zero so that, say, $\xi = \xi[0+]$. Furthermore, the \emph{semantics} of $\vp$ is defined as the language containing all infinite words over $2^{\ap}$ that satisfy $\vp$:
\[
    \l_\omega(\vp) = \setcond{\xi \in (2^{\ap})^{\omega}}{\xi \models \vp}.
\]

\subsection{Atomic Propositions Interpreted in Quantum Statistics}
\label{sec:interpretstatistics}
When using our logic to specify properties of quantum statistical systems, we need to choose appropriate atomic propositions $\ap$ and to properly define the satisfaction relation $\rho\models a$ between quantum states and atomic propositions. As pointed out in Section \ref{2pro}, statistical information about a quantum system comes from a measurement. A \emph{physical observable} is modeled by a Hermitian operator $A$ in
the state Hilbert space $\mathcal{H}$, i.e., $A^\dag =A$. Then a quantum measurement can be constructed
from $A$ as follows. An \emph{eigenvector} of $A$ is a non-zero vector
$|\psi\rangle\in\mathcal{H}$ such that $A|\psi\rangle=\lambda |\psi\rangle$ for
some complex number $\lambda$ (indeed, $\lambda$ must be real when $A$ is
Hermitian). In this case, $\lambda$ is called an \emph{eigenvalue} of $A$. For
each eigenvalue $\lambda$, the set
$\{|\psi\rangle:A|\psi\rangle=\lambda|\psi\rangle\}$ of eigenvectors
corresponding to $\lambda$ together with the zero vector is a subspace of
$\mathcal{H}$. We write $P_\lambda$ for the projection onto this subspace. Then
we have the spectral decomposition \cite[Theorem 2.1]{nielsen2010quantum}: $A=\sum_\lambda\lambda P_\lambda$, where
$\lambda$ ranges over all eigenvalues of $A$. Moreover,
$M=\{P_\lambda\}_\lambda$ is a (projective) measurement. If we perform $M$ on
the quantum system in state $\rho$, then the outcome $\lambda$ is obtained with
probability $p_\lambda=\tr(P_\lambda^\dag
P_\lambda\rho)=\tr(P_\lambda\rho)$, and the expectation of $A$ in
state $\rho$ is $$\llbracket A\rrbracket_\rho=\sum_\lambda
p_\lambda\cdot\lambda=\sum_\lambda\lambda\tr(P_\lambda\rho)=\tr(A\rho).$$
Our atomic propositions are chosen to give an estimation of the expectations of physical observables. 

\begin{definition}\label{def-sat}\begin{enumerate}\item An atomic proposition in a Hilbert space $\h$ is defined as a pair $(A,\i)$, where $A$ is an observable  in $\h$ and  $\i\subseteq \reals$ is an interval. 
\item A state $\rho\in\dh$ satisfies $a:=(A,\i)$, written $\rho\models a$, if the expectation of $A$ in $\rho$ lies in interval $\i$: $\tr(A\rho)=\llbracket A\rrbracket_\rho\in\i$. \end{enumerate}
\end{definition}

Now let us extend the satisfaction relation $\rho\models a$ to $\g\models\vp$ between a QMC $\g$ and a general LTL formula $\vp$.  
To this end, we introduce the labeling function: \begin{equation}\label{l1-fun}\lfst \colon \dh \to 2^{\ap},\qquad \lfst(\rho) = \setcond{a \in \ap}{\rho\models a}\end{equation}
  which  assigns to each quantum state the set of atomic propositions in $\ap$ satisfied by the state. We further extend the labeling function to sequences of quantum states by setting $\lfst(\rho_{0}, \rho_{1} ,\ldots) = \lfst(\rho_{0}), \lfst(\rho_{1}), \ldots$. Then we define: $$\g\mdst\vp\ {\rm if\ and\ only\ if}\ \lfst(\trjst(\g))\in\mathcal{L}_\omega(\vp)$$ where $\trjst(\g)$ is the state trajectory of $\g$ as defined in Eq. (\ref{q-traj}).

\begin{example} Given a quantum measurement $\{M_k\}_{k=1}^m$, we consider a sequence of physical observable $\{A_k=M_k^\dag M_k\}_{k=1}^{m}$ and a finite set of intervals $\{\i_l\}_{l=1}^n$ in $[0,1]$. Let $AP = \{(A_k, \i_l):1\leq k
\leq m, 1\leq l \leq n \}$ with atomic proposition $(A_k, \i_l)$ asserting that expectation $\tr(A_k\rho) \in \i_l$. Then Problem~\ref{pro:statistic} can be rephrased as: \begin{itemize}\item Given a multiple set $\{l_k: 1\leq l_k\leq n\}_{k=1}^m$, is $\g \mdst \bigwedge_{k=1}^m\square (A_k,\i_{l_k})$ (respectively, $\g \mdst \bigwedge_{k=1}^m\square\lozenge (A_k,\i_{l_k})$)?\end{itemize}
\end{example}

\subsection{Atomic Propositions Interpreted in Quantum Many-Body Systems}
\label{Subsec:many-body}

When using our logic to specify properties of quantum many-body systems, atomic propositions $\ap$ need to be chosen in a different way.
First, note that given $\ket{\psi_{N}}$ in Eq.~\eqref{eq:mps}, there exists an orthogonal decomposition $\h_{D} = \bigoplus_{m} \h_{D,m}$ such that $\ket{\psi_{N}}$ can be linearly represented by a set of families of operators $\setnocond{\setnocond{B_{m,j}\in \mathcal{B}(\h_{D,m})}_{j}}_{m}$, where $\e_{m}(\functiondot) = \sum_{j} B_{m,j} \functiondot B_{m,j}^{\dag}$ is a super-operator and $(\h_{D,m}, \e_{m})$ is irreducible, with positive coefficients $\setnocond{a_{m} > 0}_{m}$:
\[
    \ket{\psi_{N}} = \sum_{m} a_{m} \ket{\phi_{N,m}},
\]
where $$\ket{\phi_{N,m}} = \sum_{k_{1}, \dotsc, k_{N}} \tr(B_{m,k_{1}} \cdots B_{m,k_{N}}) \ket{k_{1}} \cdots \ket{k_{N}}.$$ 
This representation is called the \emph{irreducible form} in~\cite{de2017irreducible} and it can be effectively computed. 
Therefore, $\ket{\psi_{N}} = 0$ if and only if $\ket{\phi_{N,m}} = 0$ for all $m$. 
Without loss of generality, from now on, we always assume that the set $\setnocond{A_{k}}$'s corresponds to an irreducible QMC $(\h_{D}, \e)$ with $\e(\functiondot) = \sum_{k} A_{k} \functiondot A_{k}^{\dag}$. 
Further, $\ket{\psi_{N}} = 0$ if and only if $\braket{\psi_{N}}{\psi_{N}} = 0$. 
By simple calculations, we have 
\[
    \braket{\psi_{N}}{\psi_{N}} = \tr([\sum_{k}E_k^*\otimes E_k]^N)= \tr([M_{\e}^{N}]^*) = \tr(M_{\e}^N)
\]
where $E^{*}$ stands for the (entry-wise) complex conjugate of $E$, $M_{\e}=\sum_{k}E_k\otimes E_k^*$ is called the \emph{matrix representation} of $\e$, and the last equality in the above chain follows from $\tr(M_{\e})$ being a real number for any $\e$.

To specify the validity of the hypothesis about the ground states of 1-dimensional quantum many-body systems, we need the following kind of atomic propositions: 

\begin{definition}\label{def-sup}\begin{enumerate}\item An atomic proposition is defined to be an interval $\i\subseteq \reals$. \item A super-operator $\e\in\sh$ satisfies $a := \i$, written $\e\models a$, if the trace of its matrix representation $M_\e$  lies in interval $\i$; that is, $\tr(M_\e)\in\i$. 
\end{enumerate}\end{definition}

Similar to Section~\ref{sec:interpretstatistics}, the satisfaction relation $\e\models a$ can also be extended to $\g\models\vp$ between a QMC $\g$ and an LTL formula $\vp$. 
Here we use the labeling function: 
\begin{equation}\label{l2-fun}\lfdy \colon \sh \to 2^{\ap},\qquad \lfdy(\e) = \setcond{a \in \ap}{\e\models a}\end{equation} 
which  assigns to each super-operator the set of atomic propositions in $\ap$ satisfied by it. Furthermore, let $\lfdy(\e_1, \e_2 ,\ldots) = \lfdy(\e_1), \lfdy(\e_2), \ldots$ for any sequence of super-operators $\e_1,\e_2,\ldots$. Therefore, we define: 
$$\g\mddy\vp\ {\rm if\ and\ only\ if}\ \lfdy(\trjdy(\g))\in\mathcal{L}_\omega(\vp)$$ where $\trjdy(\g)$ is the super-operator  trajectory of $\g$ as defined in Eq. (\ref{s-traj}).

\begin{example}\label{Exa:many-body}  Given a finite set of matrices $\{A_k\}$ on a Hilbert space $\h$ corresponding to a (irreducible) QMC $\g=(\h,\e)$ with $\e(\functiondot)=\sum_k A_k\functiondot A_k^\dagger$, we set $\ap = \setnocond{\i_1, \i_2}$, where 
$\i_{1} = [0,0]$ and $\i_{2} = (-\infty,0) \cup (0,\infty)$. 
The atomic proposition $ \i_1$ (resp. $\i_2$) asserts that the trace of the matrix representation of the current super-operator is zero (resp. nonzero).  The properties  considered in Problem~\ref{pro:many_body} can be written as the LTL formulas:  
\begin{itemize}
	\item Is $\g \mddy \bigcirc^{(J)}  \square \i_2$?
	\item  Is $\g \mddy \bigcirc^{(J)} \lozenge \square \i_2$?
	 
\end{itemize}

\end{example}
\section{Model Checking $\g\mdst\vp$}
\label{sec:modelchecking}

%\subsection{Model Checking Problem}
With the notations in Eq.~(\ref{l1-fun}), the model checking problem for $\trjst(\g)$ against LTL formulas can be formally defined.
\begin{problem}
\label{prob:quantumstate}
    Given a QMC $\g = (\h, \e, \rho_{0})$, a labeling function $\lfst$, and an LTL formula $\vp$, decide whether $\g \mdst \vp$, i.e., whether $\lfst(\trjst(\g)) \in \l_\omega(\vp)$.
\end{problem}

%Model checking dynamical evolution of Markov chains $\trj(\c)$ has been approximately solved in~\cite{agrawal2015approximate} where  concrete \emph{atomic propositions} to estimate the current distribution are used. 
%For the flexibility of applications to quantum physics, we admit abstract linear-time temporal logic formulas. %and put a computable assumption on the meaning of them (a reasonable constraint on dynamical properties we can check).
%Further, we give an effective procedure to approximately answer the model checking problem of $\trjst(\g)$ for a \emph{periodically stable quantum Markov chain} $\g$. 
%This type of quantum Markov chains includes all classical Markov chains and \emph{irreducible and periodical (aperiodical) quantum Markov chains}~\cite{wolf2012quantum,guan2018decomposition}. 
%The main technique is based on the \emph{eigenvalue-analysis} of quantum Markov chains. 
%This method simplifies the complicated traditional approach of studying Markov chains: decomposing the state space into well-studied  sub-chains (irreducible and aperiodical Markov chains), then solving problems in terms of them and going back to the general case. All proofs can be found in the Appendix~\ref{sec:periodicstable} and~\ref{Sec:proofoflemma}.
As QMCs can simulate classical Markov chains (see Appendix~\ref{More-QMC}), the counter-example presented in~\cite{agrawal2015approximate} can be used to show that the language $\setnocond{\lfst(\trjst(\g))}$ is generally not $\omega$-regular.
Thus the standard approach of model checking $\omega$-regular languages is not directly applicable to solve Problem~\ref{prob:quantumstate}.
Following the techniques introduced in~\cite{agrawal2015approximate}, we turn to consider approximate verification problems of QMCs.
To this end, we introduce the notions of neighborhoods for quantum states and for sequences of quantum states, which are the tasks of the following two subsections. 

For simplicity, in this section, we write $\g\mdst\vp$ as $\g\models\vp$. All proofs for the results presented in this section can be found in Appendix~\ref{sec:periodicstable}.

\subsection{Neighborhood of quantum states}
 The definition of neighborhood of quantum states is induced by vector norms on $\bh$, so we first recall the vectorization of quantum states. 

Given a super-operator $\e = \{E_{k}\}_k$, its matrix representation 
 $M_{\e}$ is a linear operator on  $\h \otimes \h$ and furthermore, for any $A \in \bh$,
\[
    \e(A) = \tr_{2}(M_{\e}(A \otimes I) \ket{\Omega} \bra{\Omega})
\]
where $\tr_{2}$ is the partial trace on the second Hilbert space and $\ket{\Omega}$ is the unnormalized maximally entangled state in $\h \otimes \h$, i.e., $\ket{\Omega} = \sum_{j} \ket{j} \otimes \ket{j}$ with an orthonormal basis $\setnocond{\ket{j}}$ of $\h$. 
As a simple consequence, for the composition $\e_{2} \circ \e_{1}$ of super-operators where for any $A \in \bh$, $(\e_{2} \circ \e_{1})(A) = \e_{2}(\e_{1}(A))$, $M_{\e_{2} \circ \e_{1}}$ is exactly the matrix product $M_{\e_{2}} M_{\e_{1}}$. 
For simplicity, in this paper we freely interchange $\e$ and $M_\e$.

Finally, note that any linear map $T$ on $\bh$ admits up to $\dim(\h)^{2}$ complex \emph{eigenvalues} $\lambda$ satisfying $T(A) = \lambda A$ for some $0 \neq A \in \bh$. 
We write $\spec(T)$ for the spectrum of $T$, i.e., the set of all eigenvalues of $T$. 
The \emph{spectral radius} of $T$ is defined as $\radius{T} = \max\setcond{|\lambda|}{\lambda \in \spec(T)}$. 
In particular, for any super-operator $\e \in \sh$, $\radius{\e} = 1$. 
Denote by
\[
    \maxmag{T} = \setcond{\lambda \in \spec(T)}{|\lambda| = \radius{T}}
\]
the set of eigenvalues of $T$ with maximal magnitude. 
Note that the calculation of $\spec(\e)$ boils down to that of $\spec(M_{\e})$ since
\[
    \text{$\e(A) = \lambda A$ \quad if and only if \quad $M_{\e} \ket{A} = \lambda \ket{A}$,}
\]
where $\ket{A}$ is the vectorization of $A \in \bh$, i.e., $\ket{A} = (A \otimes I) \ket{\Omega}$. 

We choose to use the vector norm on $\bh$ and the induced operator norm on $\bbh$. 
The result will apply for any other norm, as all norms on a finite-dimensional Hilbert space are equivalent~\cite{Horn2013}.
\begin{definition} 
    Given a Hilbert space $\h$ with $d = \dim(\h)$, 
    \begin{itemize}
    \item 
        the vector norm $\norm{A}$ of $A \in \bh$ is defined to be the norm of the vector $\ket{A}$, that is, $\norm{A} \coloneqq |\braket{A}{A}|$;
    \item 
        the operator norm $\opnorm{\functiondot}$ on $\bbh$ induced by $\norm{\functiondot}$ is
        \[
            \opnorm{T} \coloneqq \sup \setcond{\norm{T(A)}}{\text{$A \in \bh$ with $\norm{A} = 1$}}.
        \]
    \end{itemize}
\end{definition}

For convenience, we denote $\opnorm{\functiondot}$ by $\norm{\functiondot}$ if no confusion arises. 
One can easily show that for any $\e \in \sh$,
\[
    \norm{\e} = \norm{M_{\e}} = \max_{\lambda \in \spec(M_{\e}^{\dag} M_{\e})} \sqrt{\lambda}. 
\]
That is, $\norm{\e}$ is the maximum singular value of $M_{\e}$. 
Then by the above equation, for any $\e_{1}, \e_{2}, \f \in \sh$,
\begin{equation}
    \norm{(\e_{1} - \e_{2}) \circ \f} \leq \norm{\e_{1} - \e_{2}}.
    \label{eq:contractive}
\end{equation}

Furthermore, for any $\rho \in \dh$ and $\e_{1}, \e_{2} \in \sh$,
\begin{eqnarray}
    \norm{\e_{1}(\rho) - \e_{2}(\rho)} \leq \norm{\e_{1} - \e_{2}} \cdot \norm{\rho} \leq \norm{\e_{1} - \e_{2}},
    \label{eq:operatornorm}
\end{eqnarray}
where the second inequality follows from $\norm{\rho} \leq 1$.

With these norms, $\epsilon$-neighborhood of quantum states can be defined as follows:
\begin{definition}
    Given a density operator $\rho \in \dh$ and $\epsilon > 0$, the (symbolic) $\epsilon$-neighborhood $\epsneigh(\rho)$ of $\rho$ is defined to be a subset of $2^{\ap}$:
    \[
        \epsneigh(\rho) = \setcond{\lfst(\rho')}{\rho' \in \dh, \norm{\rho - \rho'} < \epsilon}.
    \]
\end{definition}

%As $\ap$ is finite, the calculation of $\epsneigh(\rho)$ for given $\rho$ and $\epsilon$ can be reduced to checking for each $X \in 2^{\ap}$ if there exists $\rho'$ such that $L(\rho') = X$ and $\norm{\rho - \rho'} < \epsilon$.
%This existential problem can usually be solved effectively by \emph{semidefinite programming (SDP)} in practice (including all examples in this paper), since $\dh$ is a convex set. 
%For the details, we refer to~\cite{BoydLievenAndVandenberghe1996}; 
%similar techniques for rewriting quantum existence problems in SDP can be found in~\cite{Duan2016}. 

%Therefore, for simplicity, in this paper we assume that 
%\begin{center}
%    Given  quantum state $\rho$ and  $\epsilon> 0$, $\epsneigh(\rho)$ can be effectively computable. \hfill $(*)$
%\end{center} Actually this assumption only dependents on which properties we want check, i.e., the meaning of APs (functions $L$ and $\Gamma$).

\subsection{Neighborhoods of trajectories of QMCs}

A key property held by classical Markov chains, which plays an essential role in the approximate verification techniques developed in~\cite{agrawal2015approximate},
is the following.
\begin{proposition}[cf.~\cite{agrawal2015approximate}]
\label{prop:pstable}
    For any Markov chain $(\s, P, \mu_{0})$, there is an integer $\theta > 0$ such that
    \[
        \lim_{n \to \infty} \mu_{0} P^{n \theta} = \mu^{*}
    \]
    for some limiting distribution $\mu^{*}$. 
    Furthermore, $\theta$ is independent of $\mu_{0}$.
\end{proposition}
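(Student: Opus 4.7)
The plan is to prove this via a spectral/Jordan decomposition of the stochastic matrix $P$, echoing the eigenvalue-analysis approach the paper advertises for the quantum case.

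First I would examine $\spec(P)$. Since $P$ is row-stochastic, $\radius{P}=1$ and every eigenvalue on the unit circle must be a root of unity. This follows from the standard Perron--Frobenius structure theorem applied to each recurrent class of the chain: the peripheral spectrum of the restriction of $P$ to a recurrent class of period $d$ is exactly the set of $d$-th roots of unity, while transient states contribute only eigenvalues with $|\lambda|<1$. I would then define $\theta$ to be the least common multiple of the orders of all eigenvalues in $\maxmag{P}$. By construction, every eigenvalue $\mu$ of $P^{\theta}$ satisfies either $\mu=1$ or $|\mu|<1$, and crucially this $\theta$ depends only on $P$, not on the initial distribution $\mu_{0}$.

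The main obstacle is the second structural fact I need: the Jordan blocks of $P$ (and hence of $P^\theta$) associated with unit-modulus eigenvalues must all have size one. Without this, $P^{n\theta}$ would contain polynomial-in-$n$ factors multiplying a constant-magnitude term and could not converge. I would establish it by a boundedness argument: all entries of $P^{n}$ lie in $[0,1]$, so $\norm{P^n}$ is uniformly bounded in $n$; but if a peripheral eigenvalue $\lambda$ (with $|\lambda|=1$) had a Jordan block of size $k\geq 2$, then $(P^n)$ would exhibit growth of order $n^{k-1}|\lambda|^n = n^{k-1}$, contradicting boundedness. Hence all peripheral Jordan blocks are $1\times 1$, and in particular those belonging to eigenvalue $1$ of $P^\theta$.

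Having these two facts, I would write the Jordan decomposition $P^{\theta}=\Pi+R$, where $\Pi$ is the spectral projector onto the eigenspace of eigenvalue $1$ (so $\Pi^{2}=\Pi$ and $\Pi R=R\Pi=0$) and $R$ collects the remaining Jordan blocks, all of which have eigenvalues of magnitude strictly less than $1$. Then
\[
    P^{n\theta} = \Pi + R^{n},
\]
and since $\radius{R}<1$, $R^{n}\to 0$ as $n\to\infty$. Consequently
\[
    \lim_{n\to\infty} \mu_{0} P^{n\theta} = \mu_{0}\,\Pi \eqqcolon \mu^{*},
\]
which is a probability distribution because $\mu_{0}P^{n\theta}$ is one for every $n$ and the set of probability distributions is closed. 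This yields the limiting distribution $\mu^{*}$ and, since $\theta$ was fixed from $\spec(P)$ alone, the independence of $\theta$ from $\mu_{0}$ is immediate.
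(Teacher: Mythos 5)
Your proof is correct. The paper itself gives no proof of this proposition---it is imported wholesale from~\cite{agrawal2015approximate}---but your argument (peripheral eigenvalues are roots of unity via Perron--Frobenius on the recurrent classes, peripheral Jordan blocks are trivial because the entries of $P^{n}$ stay in $[0,1]$, and the splitting $P^{n\theta}=\Pi+R^{n}$ with $\radius{R}<1$) is the standard one and is exactly the classical counterpart of the Jordan-decomposition/eigenvalue analysis the paper carries out for the quantum case in Appendix~B.
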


For QMCs, we can define a similar notion.
\begin{definition}
\label{def:e_periodicstable}
    A QMC $\g = (\h, \e, \rho_{0})$ is called \emph{periodically stable} if there exists an integer $\theta > 0$ such that 
    \[
        \lim_{n \to \infty} \e^{n \theta}(\rho_{0}) = \rho^{*}
    \]
    for some limiting quantum state $\rho^{*}$. 
    The minimal such $\theta$, if it exists, is called the \emph{period} of $\g$ and it is denoted as $p(\g)$.
\end{definition}

Proposition~\ref{prop:pstable} essentially says that all classical Markov chains are periodically stable. 
However, as the following example shows, such a property does not hold for QMCs. 
\begin{example}
\label{ex:u}
    Let $\h$ be a two-dimensional Hilbert space with $\setnocond{\ket{0}, \ket{1}}$ being an orthonormal basis of it. 
    Let $U = \ket{0} \bra{0} + e^{i2\pi\psi} \ket{1} \bra{1}$ be a unitary operator on $\h$ where $\psi$ is irrational. 
    Then for $\rho_{0} \in \dh$, we can easily show that the QMC $(\h, \e_{U}, \rho_{0})$ where $\e_{U}(\rho) = U \rho U^{\dag}$ is not periodically stable in general.

    In fact, by a simple calculation we know 
    \[
        \e_{U}^{n \theta}(\rho_{0}) = \rho_{00} \cdot \ket{0} \bra{0} + \rho_{11} \cdot \ket{1} \bra{1} + e^{-i 2 \pi \psi n \theta} \rho_{01} \cdot \ket{0} \bra{1} + e^{i 2 \pi \psi n \theta} \rho_{10} \cdot \ket{1} \bra{0}
    \]
    where $\rho_{ij} = \bra{i} \rho \ket{j}$.
    Note that as $\psi$ is irrational, the set $\setcond{e^{i 2 \pi \psi m}}{m \in \naturals}$ is dense in the unit circle~\cite{hardy1979introduction}. 
    Thus for any integer $\theta > 0$, the limit $\lim_{n \to \infty} \e_{U}^{n \theta}(\rho_{0})$ cannot exist, unless $\rho_{01} = \rho_{10} = 0$. 
\end{example}

%With the above example, the formal result of the existence of unstable quantum Markov chains can be obtained. 
%\begin{lemma}
%    There exists an unstable (unitary) quantum Markov chain $(\h,\e,\rho_{0}).$
%\end{lemma}

%Note that $e^{i2\pi\psi}$ is an eigenvalue of the super-operator $\e_{U}$ in Example~\ref{ex:u}.
%Interestingly, we will show in the next section that this is the exact reason for a QMC not to be periodically stable. That is, any QMC is periodically stable
%if and only if the relevant super-operator does not have an eigenvalue of the form $e^{i2\pi\psi}$ for some irrational number $\psi$. This result also provides us an 
%efficient way to check if a given QMC is periodically stable (and so our technique of approximate verification in this paper applies).

Note that $\e_{U}$ in Example~\ref{ex:u} has four eigenvalues (counting multiplicity) $1$, $1$, $e^{-i 2 \pi \psi}$, and $e^{i 2 \pi \psi}$, with the corresponding eigenvectors $\ket{0} \bra{0}$, $\ket{1} \bra{1}$, $\ket{0} \bra{1}$, and $ \ket{1} \bra{0}$, respectively. 
We have shown that $(\h, \e_{U}, \rho_{0})$ is periodically stable if and only if $\rho_{0}$ vanishes in the directions of $\ket{0} \bra{1}$ and $ \ket{1} \bra{0}$.
Interestingly, this is the exact reason for a QMC not to be periodically stable (see Appendix~\ref{sec:periodicstable}). 
That is, a QMC is periodically stable if and only if the initial quantum state has no components in the directions determined by eigenvectors of the relevant super-operator corresponding to eigenvalues of the form $e^{i 2 \pi \psi}$ for some irrational number $\psi$. 
This result also provides us with an efficient way to check if a given QMC is periodically stable (and so the technique of approximate verification developed in this paper applies).

Now we focus on periodically stable QMCs, and present our key lemma (see Appendix~\ref{sec:periodicstable} for the definition of the special super-operator $\e_{\phi}$).  
\begin{lemma}
\label{mainlemmaQMC}
    Given a periodically stable QMC $\g = (\h, \e, \rho_{0})$ with period $p(\g)$, let $\eta_{k} = \e_{\phi}(\e^{k}(\rho_{0}))$, for each $0 \leq k < p(\g)$.
    Then for any $\epsilon > 0$, there exists an integer $K^{\epsilon} > 0$ such that for any $n > K^{\epsilon}$,
    \[
        \lfst(\e^{n}(\rho_{0})) \in \epsneigh(\eta_{n \modulo p(\g)}).
    \]
    Furthermore, the time complexities of computing $p(\g)$ and $K^{\epsilon}$ are both in $\bigO{d^{8}}$, where $d = \dim(\h)$.
    %Furthermore, $\theta$, $\{\eta_{k} \mid 0\leq k <\theta \}$, and $K^{\epsilon}$ are effectively computable.
\end{lemma}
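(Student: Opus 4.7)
The plan is to attack this via the spectral (Jordan) decomposition of the matrix representation $M_\e$, viewed as a $d^2 \times d^2$ operator on the vectorized state space. Since $\e$ is CPTP, every eigenvalue of $M_\e$ satisfies $|\lambda| \leq 1$, and its spectrum naturally splits into a \emph{peripheral} part $\maxmag{M_\e}$ on the unit circle and a \emph{bulk} part whose eigenvalues have modulus bounded by some $r < 1$. Periodic stability of $\g$ is precisely the condition that every peripheral eigenvalue relevant to $\rho_0$ is a root of unity --- if some $e^{i 2 \pi \psi}$ with $\psi$ irrational occurred, then, exactly as in Example \ref{ex:u}, the subsequence $\e^{n\theta}(\rho_0)$ could not converge for any $\theta$. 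Under this assumption, $p(\g)$ coincides with the least common multiple of the orders of the roots of unity that do appear.

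I would then take $\e_\phi$ to be the spectral projector onto the direct sum of the eigenspaces of the peripheral eigenvalues; since the peripheral spectrum of a CPTP map is semisimple, $\e_\phi$ is a genuine (non-generalized) projector that commutes with $\e$ and satisfies $\e^{p(\g)} \circ \e_\phi = \e_\phi$. Consequently,
\[
    \e_\phi(\e^n(\rho_0)) = \e^n(\e_\phi(\rho_0)) = \e^{n \modulo p(\g)}(\e_\phi(\rho_0)) = \e_\phi(\e^{n \modulo p(\g)}(\rho_0)) = \eta_{n \modulo p(\g)},
\]
while $(\mathrm{id} - \e_\phi) \circ \e^n$ lives entirely in the bulk, where a standard Jordan-block estimate yields an operator-norm bound of the form $\norm{(\mathrm{id} - \e_\phi) \circ \e^n} \leq C\, n^{d^2 - 1}\, r^n$ for a constant $C > 0$ computable from the decomposition.

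Combining these facts with $\norm{\rho_0} \leq 1$ gives
\[
    \norm{\e^n(\rho_0) - \eta_{n \modulo p(\g)}} = \norm{(\mathrm{id} - \e_\phi)(\e^n(\rho_0))} \leq C\, n^{d^2 - 1}\, r^n,
\]
which tends to zero. Taking $K^\epsilon$ to be the smallest integer beyond which the right-hand side drops below $\epsilon$ (so $K^\epsilon = \bigO{(d^2 \log d + \log(1/\epsilon)) / \log(1/r)}$) yields what we want: for all $n > K^\epsilon$ the state $\e^n(\rho_0)$ lies within vector-norm distance $\epsilon$ of $\eta_{n \modulo p(\g)}$, and hence $\lfst(\e^n(\rho_0)) \in \epsneigh(\eta_{n \modulo p(\g)})$ by the definition of the $\epsilon$-neighborhood.

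For the complexity, all required quantities reduce to operations on the $d^2 \times d^2$ matrix $M_\e$: its Jordan decomposition, the identification of its peripheral spectrum and derived period $p(\g) \leq d^2$, the assembly of the $p(\g)$ states $\eta_0, \dots, \eta_{p(\g)-1}$ via $\e_\phi$ and successive applications of $\e$, and the numerical solution of $C n^{d^2-1} r^n < \epsilon$, each of which fits within the stated $\bigO{d^8}$ bound. The main technical obstacle is two-fold: (a) justifying rigorously the equivalence between periodic stability and the root-of-unity characterization of the peripheral spectrum, which has to account for the possibility of $\rho_0$ vanishing in some eigendirections; and (b) producing explicit, polynomially-bounded constants $C$ and $r$ so that $K^\epsilon$ is effectively computable within the claimed $\bigO{d^8}$ budget, since the polynomial factor $n^{d^2 - 1}$ from Jordan blocks must be controlled to avoid inflating the complexity bound.
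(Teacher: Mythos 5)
Your proposal is correct and follows essentially the same route as the paper: Jordan decomposition of $M_{\e}$, the peripheral spectral projector $\e_{\phi}$, the root-of-unity characterization of periodic stability, and a Jordan-block estimate of the form $C\mu^{n}n^{k}$ on the non-peripheral part to extract $K^{\epsilon}$, with the complexity dominated by the $\bigO{d^{8}}$ Jordan decomposition. The only caveat is that $\e^{p(\g)}\circ\e_{\phi}=\e_{\phi}$ need not hold as an operator identity, since $p(\g)$ is determined only by the peripheral eigenvalues relevant to $\rho_{0}$; it does hold when applied to $\rho_{0}$, which is all your argument needs, and you already flag this subtlety.
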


With Lemma~\ref{mainlemmaQMC}, we can define the notion of neighborhood of trajectories for periodically stable QMCs.

\begin{definition}
\label{def:trajneigh}
    Given a periodically stable QMC $\g= (\h, \e, \rho_{0})$ and $\epsilon > 0$, the (symbolic) $\epsilon$-neighborhood of the trajectory $\trjst(\g)$ of $\g$ is defined to be the language $\epsneigh(\trjst(\g))$ over $(2^{\ap})^{\omega}$ such that $\xi \in \epsneigh(\trjst(\g))$ if and only if
    \begin{itemize}
    \item 
        $\xi[n] = \lfst(\e^{n}(\rho_{0}))$ for all $0 \leq n \leq K^{\epsilon}$;
    \item 
        $\xi[n] \in \epsneigh(\eta_{n \modulo p(\g)})$ for all $n > K^{\epsilon}$,
    \end{itemize}
    where the states $\setcond{\eta_{k}}{0 \leq k <  p(\g)}$ and $K^{\epsilon}$ are as given in Lemma~\ref{mainlemmaQMC}.
\end{definition}

\subsection{Approximate Verification of QMCs}

With Definition~\ref{def:trajneigh}, we can state and solve the approximate model checking problems for QMCs against LTL formulas as follows.

\begin{problem}
\label{prob:approx}
    Given a periodically stable QMC $\g = (\h, \e, \rho_{0})$, a labeling function $L$, an LTL formula  $\vp$, and $\epsilon > 0$, decide whether
    \begin{enumerate}
    \item 
         \emph{$\g$ $\epsilon$-approximately satisfies $\vp$ from below}, denoted $\g \models_{\epsilon} \vp$; that is, whether $\epsneigh(\trjst(\g)) \cap \l_{\omega}(\vp) \neq \emptyset$; 
    \item 
         \emph{$\g$ $\epsilon$-approximately satisfies $\vp$ from above}, denoted $\g \models^{\epsilon} \vp$; that is, whether $\epsneigh(\trjst(\g)) \subseteq \l_{\omega}(\vp)$. 
    \end{enumerate}
\end{problem}

To justify that Problem~\ref{prob:approx} is indeed an approximate version of Problem~\ref{prob:quantumstate}, we first note that $\lfst(\trjst(\g)) \in \epsneigh(\trjst(\g))$. 
Then we have three cases:
\begin{enumerate}
\item 
    if $\g \not\models_{\epsilon} \vp$, then $\epsneigh(\trjst(\g)) \cap \l_{\omega}(\vp)= \emptyset$, and hence $\lfst(\trjst(\g)) \notin \l_{\omega}(\vp)$;
\item 
    if $\g \models^{\epsilon} \vp$, then $\epsneigh(\trjst(\g)) \subseteq \l_{\omega}(\vp)$, and hence $\lfst(\trjst(\g)) \in \l_{\omega}(\vp)$;
\item 
    if neither $\g \not\models_{\epsilon} \vp$ nor $\g \models^{\epsilon} \vp$, then we may halve $\epsilon$ and repeat the approximate model checking procedure presented in cases 1 and 2. 
\end{enumerate}

The first two cases both give (negative or affirmative) answers to Problem~\ref{prob:quantumstate}. 
Note that in some extreme situation, the procedure presented above may not terminate. 
To determine when the procedure terminates seems difficult and we would like to leave it
as future work.

Finally, to solve Problem~\ref{prob:approx}, we represent $\epsneigh(\trjst(\g))$ in Definition~\ref{def:trajneigh} as an $\omega$-regular expression
%\begin{eqnarray}\label{Eq_languageconstrucution}
%\[
%    \{L(\rho_{0})\}\{L(\e(\rho_{0}))\}\cdots\{L(\e^{K^{\epsilon}}(\rho_{0}))\} \cdot\left(\epsneigh(\eta_{K^{\epsilon}+1\ \mathit{mod}\ \theta})\cdots \epsneigh(\eta_{K^{\epsilon}+\theta\ \mathit{mod}\ \theta})\right)^{\omega}
%\]
\[
    \epsneigh(\trjst(\g)) = \setnocond{\lfst(\rho_{0})} \cdot \setnocond{\lfst(\e(\rho_{0}))} \cdots \setnocond{\lfst(\e^{K^{\epsilon}}(\rho_{0}))} \cdot \left(\epsneigh(\zeta_{1}) \cdots \epsneigh(\zeta_{p(\g)})\right)^{\omega}
\]%\end{eqnarray}
where $\zeta_{i} = \eta_{(K^{\epsilon} + i) \modulo p(\g)}$, $1 \leq i \leq p(\g)$,
%where $\theta = p(\g)$, 
and for any two sets $X$ and $Y$, $X \cdot Y = \setcond{xy}{x \in X, y \in Y}$.
Thus $\epsneigh(\trjst(\g))$ is $\omega$-regular and standard techniques~\cite{baier2008principles,HandbookMC18} can be employed to check if $\epsneigh(\trjst(\g)) \cap \l_{\omega}(\vp) = \emptyset$ or $\epsneigh(\trj(\g)) \subseteq \l_{\omega}(\vp)$. 

\begin{theorem} 
\label{thm:main}
    Given a periodically stable QMC $\g = (\h, \e, \rho_{0})$ with $\dim(\h)=d$, a labeling function $L$, an LTL formula  $\vp$, and $\epsilon > 0$, the approximate verification problems presented in Problem~\ref{prob:approx} can be solved in time $\bigO{2^{\bigO{|\vp|}} \cdot (K^{\epsilon} + p(\g))}=\bigO{2^{\bigO{|\vp|}}d^8}$, where $|\vp|$ is the length of $\vp$. 
    %Furthermore, the time complexities of computing $\eta_{0}, \dotsc,\eta_{\theta - 1}$ and $K^{\epsilon}$ are both in $\bigO{n^{8}}$.
\end{theorem}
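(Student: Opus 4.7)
The plan is to reduce the approximate verification problems to the emptiness problem for intersections of $\omega$-regular languages, and then to account for the sizes of the automata built from the two ingredients: the symbolic neighborhood $\epsneigh(\trjst(\g))$ and the LTL formula $\vp$. The $\omega$-regular expression for $\epsneigh(\trjst(\g))$ displayed just before the theorem has a ``lasso'' shape with a stem of $K^{\epsilon}+1$ singleton blocks followed by a cycle of $p(\g)$ blocks, where each cycle block is the \emph{set} $\epsneigh(\zeta_{i}) \subseteq 2^{\ap}$. I would first convert this expression to a deterministic B\"uchi automaton $\mathcal{A}_{\g,\epsilon}$ whose state space is essentially a chain of $K^{\epsilon}+1$ states feeding into a cycle of $p(\g)$ states, with each transition labeled by either a single letter $\lfst(\e^{n}(\rho_{0}))$ on the stem or by any element of $\epsneigh(\zeta_{i})$ on the cycle. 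Thus $|\mathcal{A}_{\g,\epsilon}| = \bigO{K^{\epsilon}+p(\g)}$.

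Next I would apply the standard LTL-to-NBA translation (see~\cite{baier2008principles}) to obtain a nondeterministic B\"uchi automaton $\mathcal{A}_{\vp}$ of size $2^{\bigO{|\vp|}}$ with $\l_{\omega}(\mathcal{A}_{\vp})=\l_{\omega}(\vp)$. For the ``from below'' problem $\g \models_{\epsilon} \vp$, I would form the product automaton $\mathcal{A}_{\g,\epsilon} \times \mathcal{A}_{\vp}$, whose language is $\epsneigh(\trjst(\g)) \cap \l_{\omega}(\vp)$, and test nonemptiness by a linear-time accepting-lasso search; the product has size $\bigO{(K^{\epsilon}+p(\g)) \cdot 2^{\bigO{|\vp|}}}$, so this step costs the same. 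For the ``from above'' problem $\g \models^{\epsilon} \vp$, I would translate $\neg\vp$ instead (its size is still $\bigO{|\vp|}$) and check whether $\epsneigh(\trjst(\g)) \cap \l_{\omega}(\neg\vp) = \emptyset$ by the same construction.

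It remains to bound the cost of building $\mathcal{A}_{\g,\epsilon}$ itself. Lemma~\ref{mainlemmaQMC} supplies $p(\g)$ and $K^{\epsilon}$ together with the states $\eta_{0},\dotsc,\eta_{p(\g)-1}$ in time $\bigO{d^{8}}$; with these in hand, the labels on the stem are obtained by $K^{\epsilon}+1$ super-operator applications (each $\bigO{d^{6}}$), and the cycle labels $\epsneigh(\zeta_{i})$ are obtained by enumerating, for each atomic proposition $a\in \ap$ and each $i$, whether some density operator within $\epsilon$ of $\zeta_{i}$ satisfies $a$ -- a check that under the affine/interval nature of the atomic propositions of Definition~\ref{def-sat} reduces to a constant number of trace computations. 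Since both $K^{\epsilon}$ and $p(\g)$ are themselves bounded by $\bigO{d^{8}}$ (they are outputs of the $\bigO{d^{8}}$-time procedure of Lemma~\ref{mainlemmaQMC}), the assembly of $\mathcal{A}_{\g,\epsilon}$ runs in $\bigO{d^{8}}$ as well.

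Combining the three ingredients yields a total running time $\bigO{2^{\bigO{|\vp|}}\cdot (K^{\epsilon}+p(\g))} = \bigO{2^{\bigO{|\vp|}}\, d^{8}}$ as claimed. The main obstacle I anticipate is the bookkeeping for the neighborhood automaton: one must make sure that the labels on the cyclic part really are the sets $\epsneigh(\zeta_{i})$ rather than single letters (otherwise one would only verify one representative of the neighborhood and miss the approximation semantics), and that the product construction remains correct when the ``stem'' part is constrained to single letters while the ``cycle'' part ranges over a set -- this is handled cleanly by letting $\mathcal{A}_{\g,\epsilon}$ have transitions labeled by \emph{subsets} of $2^{\ap}$ and taking the usual product with $\mathcal{A}_{\vp}$, so the argument is really an exercise in lasso-automaton intersection once the lemma is invoked.
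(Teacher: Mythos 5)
Your proposal follows essentially the same route as the paper: its Algorithm~\ref{ModelCheck} builds exactly the lasso-shaped B\"uchi automaton for $\epsneigh(\trjst(\g))$ of size $\bigO{K^{\epsilon}+p(\g)}$, the NBA for $\vp$ (and for $\neg\vp$, to avoid the PSPACE-hard inclusion check), and decides both questions by product and emptiness, with Lemma~\ref{mainlemmaQMC} supplying $p(\g)$, $K^{\epsilon}$ and the states $\eta_{k}$ in $\bigO{d^{8}}$ time. The only cosmetic difference is that the paper computes the neighborhoods $\epsneigh(\eta_{k})$ by semidefinite programming rather than by your direct interval/trace argument.
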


\begin{algorithm}
    \caption{ModelCheck($\g,  \ap, L_s, \vp, \epsilon$)}
    \label{ModelCheck}
    \begin{algorithmic}[1]
        \REQUIRE A periodically stable QMC $\g = (\h, \e, \rho_0)$ with Kraus operators $\setnocond{E_{i}}_{i}$, a finite set of atomic propositions $\ap$,  a labeling function $\lfst$,  an LTL formula $\vp$, and $\epsilon> 0$.
        \ENSURE \TRUE{}, \FALSE{}, or \textbf{unknown}, where \TRUE{} indicates $\g \models \vp$, \FALSE{} indicates $\g \not\models \vp$, and \textbf{unknown} stands for an inconclusive answer.
        \STATE Compute $M_{\e}$ as $\sum_{i} E_{i} \otimes E_{i}^{*}$
        	\label{algo:line:matrixrepresentation}
        \STATE Decompose $M_\e$ into $S$, $J$, and $S^{-1}$, i.e., $M_\e=SJS^{-1}$,  by Jordan decomposition
        \STATE Get $M_{\e_{\phi}} = S J_{\phi} S^{-1}$, $C$, and $p(\g)$ by means of $S$ and $J$
        \STATE Compute the integer $M^{\epsilon}$ by solving $C(M^{\epsilon}p(\g))^{d_{\mu}-1} < \epsilon$  and $M^{\epsilon}p(\g)+1 \geq d_{\mu} $
        \STATE Set $K^{\epsilon}$ to be $M^{\epsilon} p(\g)$
        \STATE Compute $\ket{\rho_0}$ as $(\rho_0\otimes\i)\ket{\Omega}$ 
        \FORALL{$k \in \setnocond{0, 1, \dotsc, p(\g) - 1}$}
            \STATE Set $\ket{\eta_k}$ to be $M_{\e_{\phi}}M^{k}_{\e}\ket{\rho_0}$
            \STATE Compute $\epsneigh(\eta_k)$ by semidefinite programming
        \ENDFOR
        	\label{algo:line:neighbourhood}
        \FORALL{$k \in \setnocond{0, 1, \dotsc, K^{\epsilon}-1}$}
        	\label{algo:line:prefixbegin}
        \STATE Compute $\rho_k$ as $\e^k(\rho_0)$
        \STATE Compute $\lfst(\rho_k)$
        \ENDFOR
        	\label{algo:line:prefixend}
        \STATE Let $\epsneigh(\trj(\g))$ be the $\omega$-regular language\\
            ~\hfill$\epsneigh(\trj(\g)) = \setnocond{\lfst(\rho_{0})} \setnocond{\lfst(\rho_{1})} \cdots \setnocond{\lfst(\rho_{K^{\epsilon}-1})} \cdot (\epsneigh(\eta_{0}) \epsneigh(\eta_{1}) \cdots \epsneigh(\eta_{p(\g)-1}))^{\omega}$\hfill~
        \STATE Construct the NBA $A_{\vp}$ for $\vp$ \COMMENT{standard construction}
            \label{algo:line:NBAformula}
        \STATE Construct the NBA $A_{U}$ accepting $\epsneigh(\trj(\g))$\COMMENT{standard lasso-shaped construction}
        \label{algo:line:NBAlasso}
        \IF[standard B\"uchi automata operations]{$\l(A_{U}) \cap \l(A_{\vp}) = \emptyset$}
            \label{algo:line:NBAintersectionEmptiness}
            \RETURN \FALSE
        \ELSIF[standard B\"uchi automata operations]{$\l(A_{U}) \subseteq \l(A_{\vp})$}
            \label{algo:line:NBAinclusion}
            \RETURN \TRUE
        \ELSE
            \RETURN \textbf{unknown}
        \ENDIF
    \end{algorithmic}
\end{algorithm}
In the end, we develop a model checking algorithm (Algorithm~\ref{ModelCheck}) to answer Problem~\ref{prob:approx}. From line~\ref{algo:line:matrixrepresentation} to line~\ref{algo:line:neighbourhood}, by Lemma~\ref{mainlemmaQMC}, we compute $K^\epsilon$, $p(\g)$, matrix representation $\{\eta_{k}\}_{k=0}^{p(\g)-1}$ and $\{U_{\epsilon}(\eta_{k})\}_{k=0}^{p(\g)-1}$. The aim of the processing line~\ref{algo:line:prefixbegin} to line~\ref{algo:line:prefixend} is to calculate $\{\lfst(\rho_k)\}_{k=0}^{K^\epsilon-1}$. 
The B\"uchi automaton $A_{\vp}$ for the LTL formula $\vp$ is constructed at line~\ref{algo:line:NBAformula} by means of a standard construction (see, e.g.,~\cite{HandbookMC18}) while the B\"uchi automaton $A_{U}$ at line~\ref{algo:line:NBAlasso} is obtained by an ordinary lasso-shaped construction: 
it is enough to insert a new state between each letter, make the state joining the stem and the lasso part accepting, and use the accepting state as the target of the last action in the lasso.
The two operations on B\"uchi automata at lines~\ref{algo:line:NBAintersectionEmptiness} and~\ref{algo:line:NBAinclusion} are standard operations:
intersection and emptiness reduce to automata product and strongly connected components decomposition, which require quadratic time (cf.~\cite{HandbookMC18}). Language inclusion, however, in general requires exponential time and is PSPACE-complete (cf.~\cite{HandbookMC18});
in our case, however, we can remain in quadratic time by replacing the check $\l(A_{U}) \subseteq \l(A_{\vp})$ with the check $\l(A_{U}) \cap \l(A_{\neg\vp}) = \emptyset$, since it is common in the model checking community to assume that constructing the B\"uchi automata $A_{\vp}$ and $A_{\neg\vp}$ require the same effort.

\section{Modeling Checking $\g\mddy \vp$}\label{Sec:sigma2}
 With the notations in Eq.~(\ref{l2-fun}), the model checking problem for $\trjdy(\g)$ against LTL formulas can be formally defined.
\begin{problem}
\label{prob:superoperator}
    Given a QMC $\g = (\h, \e)$ with $\e$ being periodically stable, a labeling function $\lfdy$, and an LTL formula $\vp$, decide whether $\g \mddy \vp$, i.e., whether $\lfdy(\trjdy(\g)) \in \l_\omega(\vp)$.
\end{problem}

A super-operator $\e$ is called \emph{periodically stable} if there exists an integer $\theta > 0$ such that $\lim_{n \to \infty} \e^{n \theta}$ exists in $\sh$. 
The minimal such $\theta$, if it exists, is called the \emph{period} of $\e$ and denoted by $p(\e)$.
Similar to model checking $\g\mdst \vp$, we can define $U_{\epsilon}(\e)$ and $U_{\epsilon}(\trjdy(\g))$ for any $\e \in \sh$ and $\epsilon > 0$. In this section, we simply write $\g\mddy \vp$ as $\g\models \vp$. We hope to answer the following approximate model checking question:
\begin{problem}
\label{prob:quantum}
    Given a QMC $\g=(\h,\e)$ with $\e$ being periodically stable, a labeling function $\lfdy \colon \sh \to 2^{\ap}$, and an LTL formula $\vp$, decide whether 
    \begin{enumerate}
    \item 
         \emph{$\g$ $\epsilon$-approximately satisfies $\vp$ from below}, denoted $\g \models_{\epsilon} \vp$; that is, whether $\epsneigh(\trjdy(\g)) \cap \l_{\omega}(\vp) \neq \emptyset$; 
    \item 
         \emph{$\g$ $\epsilon$-approximately satisfies $\vp$ from above}, denoted $\g \models^{\epsilon} \vp$; that is, whether $\epsneigh(\trjdy(\g)) \subseteq \l_{\omega}(\vp)$. 
    \end{enumerate}
\end{problem}
It turns out that Problem~\ref{prob:quantum} can be easily reduced to Problem~\ref{prob:approx} (see Appendix~\ref{Sec:reduction} for the proof), so Algorithm~\ref{ModelCheck} can be directly used to solve Problem~\ref{prob:quantum}. However, the complexity increases significantly. So we also develop a direct method for it (see Appendix~\ref{Section:solvemanybody} for the details). 

\begin{theorem}
\label{thm:approxs}
    Given a  QMC $\g=(\h,\e)$ with $\e$ being periodically stable and dim$(\h)=d$, a labeling function $L$, an LTL formula $\vp$, and $\epsilon > 0$, the following two problems can be solved in time $\bigO{2^{\bigO{|\vp|}} \cdot (K^{\epsilon} + p(\e))}=\bigO{2^{\bigO{|\vp|}}d^8}:$
    \begin{enumerate}
    \item 
        decide whether \emph{$\g$ $\epsilon$-approximately satisfies $\vp$ from below}, denoted $\g \models_{\epsilon} \vp$; that is, to check whether $\epsneigh(\trjdy(\g))\cap \l_{\omega}(\vp)\neq \emptyset$;
    \item 
        decide whether  \emph{$\g$ $\epsilon$-approximately satisfies $\vp$ from above}, denoted $\g \models^{\epsilon} \vp$; that is, to check whether $\epsneigh(\trjdy(\g))\subseteq \l_{\omega}(\vp)$.
    \end{enumerate}
\end{theorem}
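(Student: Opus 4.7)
My plan is to mirror the proof of Theorem~\ref{thm:main} after reducing, or adapting, the super-operator trajectory problem to the state trajectory setting. Two routes are available: a black-box reduction to Problem~\ref{prob:approx} and a direct adaptation of Algorithm~\ref{ModelCheck}. Both must yield the claimed complexity, but only the direct one gives the tight $\bigO{d^{8}}$ bound.

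The plan for the reduction step is to use the Choi--Jamio{\l}kowski correspondence to lift the super-operator trajectory $\trjdy(\g)=\mathrm{id}_{\h},\e,\e^{2},\dotsc$ to a state trajectory of an auxiliary QMC $\tilde{\g}=(\h\otimes\h,\mathrm{id}_{\h}\otimes \e,\tfrac{1}{d}\ket{\Omega}\bra{\Omega})$. Since $(\mathrm{id}_{\h}\otimes\e^{n})(\tfrac{1}{d}\ket{\Omega}\bra{\Omega})$ is the (normalized) Choi state of $\e^{n}$, and $\tr(M_{\e^{n}})$ is a fixed linear functional of $\e^{n}$, each atomic proposition $\i\subseteq\reals$ in the sense of Definition~\ref{def-sup} can be translated into a state atomic proposition $(A_{\i},\i)$ on $\h\otimes\h$ via an observable $A_{\i}$ satisfying $\tr(A_{\i}\cdot(\mathrm{id}_{\h}\otimes\e^{n})(\tfrac{1}{d}\ket{\Omega}\bra{\Omega}))=\tr(M_{\e^{n}})$. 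Periodic stability of $\e$ transfers directly to $\tilde{\e}$ with the same period. Theorem~\ref{thm:main} then applies and already decides Problem~\ref{prob:quantum}; however, since $\dim(\h\otimes\h)=d^{2}$, the resulting complexity is $\bigO{2^{\bigO{|\vp|}}d^{16}}$, which is weaker than the statement.

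To recover $\bigO{d^{8}}$, I would instead adapt Algorithm~\ref{ModelCheck} directly, working on the $d^{2}\times d^{2}$ matrix representation $M_{\e}$ throughout. The key steps are: (i) establish the super-operator analogue of Lemma~\ref{mainlemmaQMC}, namely that there exist $\zeta_{k}\in\sh$ for $0\le k<p(\e)$ and an integer $K^{\epsilon}$ with $\lfdy(\e^{n})\in U_{\epsilon}(\zeta_{n\,\modulo\,p(\e)})$ for all $n>K^{\epsilon}$, both computable in $\bigO{d^{8}}$ time; (ii) mirror Definition~\ref{def:trajneigh} to obtain a lasso-shaped $\omega$-regular language
\[
U_{\epsilon}(\trjdy(\g))=\setnocond{\lfdy(\mathrm{id}_{\h})}\cdot\setnocond{\lfdy(\e)}\cdots\setnocond{\lfdy(\e^{K^{\epsilon}})}\cdot\bigl(U_{\epsilon}(\zeta_{1})\cdots U_{\epsilon}(\zeta_{p(\e)})\bigr)^{\omega};
\]
(iii) translate $\vp$ to an NBA $A_{\vp}$ of size $2^{\bigO{|\vp|}}$ and build an NBA $A_{U}$ for the lasso expression; (iv) decide the two approximate satisfactions by emptiness of $\l(A_{U})\cap\l(A_{\vp})$ and by emptiness of $\l(A_{U})\cap\l(A_{\neg\vp})$, each in quadratic time in the product size. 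For step (i) I would apply a Jordan decomposition $M_{\e}=SJS^{-1}$, split $J=J_{\phi}\oplus J_{r}$ with $J_{\phi}$ collecting the eigenvalues of modulus one (all roots of unity by periodic stability), set $p(\e)$ to the LCM of their orders, and define $\zeta_{k}$ as the super-operator whose matrix representation is $SJ_{\phi}S^{-1}M_{\e}^{k}$; the tail is controlled by $\opnorm{M_{\e}^{n}-SJ_{\phi}S^{-1}M_{\e}^{n\,\modulo\,p(\e)}}\le C n^{d_{\mu}-1}\mu^{n}$ with $\mu<1$, exactly as in Lemma~\ref{mainlemmaQMC}.

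The main obstacle is step (i): showing that the super-operator analogue of Lemma~\ref{mainlemmaQMC} can be proved with the same $\bigO{d^{8}}$ budget. Since $M_{\e}$ has dimension $d^{2}$, Jordan decomposition and all subsequent matrix algebra live at scale $d^{2}$, so basic operations cost $\bigO{d^{6}}$; the remaining factor $d^{2}$ comes from the fact that the largest Jordan block size $d_{\mu}$ and hence $K^{\epsilon}$ are at most polynomial in $d^{2}$. Verifying that the constants $C$ and the conditioning of $S$ can be absorbed into this bound, exactly mirroring the argument of Lemma~\ref{mainlemmaQMC} but one level ``higher'' on $M_{\e}$ rather than on a density operator, is the only delicate point; once this is in place, combining with the standard NBA constructions yields the announced bound $\bigO{2^{\bigO{|\vp|}}\cdot(K^{\epsilon}+p(\e))}=\bigO{2^{\bigO{|\vp|}}d^{8}}$ for both items of the theorem.
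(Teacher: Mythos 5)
Your proposal is correct and follows essentially the same route as the paper: the Choi--Jamio{\l}kowski reduction to the state-trajectory problem (with the resulting blow-up to $d^{16}$, which the paper likewise notes and discards) appears in Appendix~C, and your direct method---the super-operator analogue of Lemma~1 obtained from the Jordan decomposition of the $d^{2}\times d^{2}$ matrix $M_{\e}$, the peripheral projector $J_{\phi}$, the period as an lcm of root-of-unity orders, the tail bound $C\mu^{n}n^{d_{\mu}-1}$, and the lasso-shaped $\omega$-regular language fed to standard NBA emptiness checks---is exactly the paper's Appendix~D combined with Algorithm~1. The only cosmetic differences are the tensor-factor ordering in the reduction and your explicit construction of observables $A_{\i}$ where the paper just defines the labeling function through the bijection $\f\mapsto(\f\otimes\mathrm{id}_{\h})(\ket{\Omega}\bra{\Omega})/d$.
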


It is worth noting that as irreducible QMCs are the QMCs with periodically stable super-operators (see Appendix~\ref{Section:solvemanybody}), the quantum many-body problems in Problem~\ref{pro:many_body} can always be approximately answered by the reduction processes from the general case to the irreducible case in Section~\ref{Subsec:many-body} and Theorem~\ref{thm:approxs}.

\section{Experiments}
\label{sec:experiments}

In this section, to show the use of model checking techniques developed in this paper, we run experiments on 
AKLT (Affleck-Kennedy-Lieb-Tasaki) and cluster models, which are two essential 1-dimensional quantum many-body systems~\cite{Perez-Garcia2007}.
All source codes have been submitted as supplemental materials.
\subsection{AKLT model}

The AKLT model was introduced by Affleck, Kennedy, Lieb and Tasaki in~\cite{affleck1988valence}, and it was the first analytical example of a quantum spin chain supporting the so-called Haldane’s conjecture: it is a local spin-1 Hamiltonian with Heisenberg-like interactions and a non-vanishing spin gap in thermodynamic limit. The spin of a particle describes its possible angular momentum values. 

The ground state of AKLT model can be expressed by the MPS as follows:
\[\ket{\psi_{N}} = \sum_{k_{1},\dotsc,k_{N}=1}^{3} \tr(A_{k_{1}} \cdots A_{k_{N}}) \ket{k_{1}}\otimes \cdots \otimes\ket{k_{N}}
\]
where
\[
    A_{1} =
        \left[
            \begin{matrix}
                0 & \sqrt{2/3} \\
                0 & 0
            \end{matrix}
        \right],
    \qquad
    A_{2} = 
        \left[
            \begin{matrix}
                -\sqrt{1/3} & 0 \\
                0 & \sqrt{1/3}
            \end{matrix}
        \right],
    \qquad
    A_{3} = 
        \left[
            \begin{matrix}
                0 & 0 \\
                -\sqrt{2/3} & 0 
            \end{matrix}
        \right].
\]

Note that the set $\{A_{k}\}_{k=1}^3$ of matrices corresponds to  an irreducible QMC, and so the corresponding super-operator is periodically stable (see Appendix~\ref{Section:solvemanybody}). Therefore, by Theorem~\ref{thm:approxs} the validity of the MPS in Problem~\ref{pro:many_body} can be approximately checked. Specifically, using the notations in Example~\ref{Exa:many-body} and setting $J=3$, we can check whether $\g\models_d \bigcirc^{(J)} \square \i_2$ and $\g\models_d \bigcirc^{(J)} \lozenge \square \i_2$ by implementing the reduction in Appendix~\ref{Sec:reduction} and  Algorithm~\ref{ModelCheck}, where $\g=(\h,\e)$, dim$(\h)=2$, and $\e=\{A_{k}\}_{k=1}^3$.

We repeatedly run the algorithm by setting $\epsilon=0.5$ initially and halving it in the next iteration whenever $\textbf{unknown}$ is returned. After three iterations, we get the answer $\textbf{true}$, which indicates that the MPS of ground states of the AKLT model is valid in thermodynamic limit.
The detailed result is shown in Table~{\ref{tb:experiment}.

\subsection{Cluster Model}

\begin{table}[!tbp]
\centering
\renewcommand{\arraystretch}{1.3}
\begin{tabular}{|c|c|c|c|}
\hline
\diagbox{$\vp$}{Output}{$\epsilon$} & 0.5 & 0.25 & 0.125 \\ 
\hline
$\bigcirc^{(J)} \square \i_2$&\textbf{unknown}&\textbf{unknown}&\textbf{true}\\
\hline
$\bigcirc^{(J)} \lozenge \square \i_2$&\textbf{unknown}&\textbf{unknown}&\textbf{true}\\
\hline
\end{tabular}
\vspace{1em}
\caption{The experimental results of checking ground states of both AKLT and cluster models}
\label{tb:experiment}
\end{table}
A state in cluster models is a type of highly entangled state of multiple particles~\cite{briegel2001persistent}, and has been realized experimentally. It is generated in lattices of particles with Ising type interactions~\cite{zeng2015quantum}, and is especially used as a resource state for the one-way quantum computation~\cite{onewaycomputer}.

The ground state of a cluster model can be expressed by the MPS as follows:
\[\ket{\psi_{N}} = \sum_{k_{1},\dotsc,k_{N}=1}^{2} \tr(A_{k_{1}} \cdots A_{k_{N}}) \ket{k_{1}}\otimes \cdots \otimes\ket{k_{N}}
\]
where
\[
    A_1 =
        \frac{1}{\sqrt{2}}
        \left[
            \begin{matrix}
                0 & 0 \\
                1 & 1
            \end{matrix}
        \right],
    \qquad
    A_2 = 
        \frac{1}{\sqrt{2}}
        \left[
            \begin{matrix}
                1 & -1 \\
                0 & 0
            \end{matrix}
        \right].
    \qquad
\]

Similar to AKLT model (as the set $\{A_{k}\}_{k=1}^2$ of matrices  also corresponds to an irreducible QMC), Algorithm~\ref{ModelCheck} can be used to check the validity of the above MPS. The experimental result is the same as that in Table 1, from which we conclude that the MPS of ground states of the cluster model is also valid in thermodynamic limit.

\section{Conclusion}  

In this paper, we show that
model checking techniques can be adapted for applications in quantum statistical mechanics and quantum many-body systems. The key observation is that the evolution of quantum systems in a question can be modeled by a QMC and the properties that interest us can be described by appropriate LTL formulas. Interestingly, by interpreting the dynamics of QMCs and the atomic propositions of LTL in different ways, the same model checking technique can be used for different applications; for details, see Problem~\ref{pro:statistic} for long-term classical information in quantum systems and Problem~\ref{pro:many_body} for validity of MPS ground states in many-body systems.
We then present an effective algorithm to approximately model check  a  periodically stable QMC against LTL formulas. Examples from AKLT and cluster models, two important 1-dimensional quantum many-body systems, are studied to illustrate the utility of our algorithm.
 
For future study, we are going to develop model checking algorithms for QMCs which are not periodically stable. Note that by Proposition~\ref{prop:pstable}, these QMCs have no classical counterparts, and novel techniques must be invented to analyze their long-term behaviors. Another interesting line of research is to find more applications of our model checking techniques in other research fields such as quantum algorithm analysis and quantum programming theory~\cite{ying2016foundations}.
%, and verifying the correctness of quantum PageRank algorithm~\cite{Paparo2013}.
%\item 
    
%\item 
%    In practice, one can expect to improve our algorithm for specific classes of QMCs such as open quantum random walk~\cite{Attal2012}.
%\end{itemize}
\section*{Acknowledgments} 
This work was partly supported by the National Key R$\&$D Program of China (Grant No: 2018YFA0306701), the National Natural Science Foundation of China (Grant No: 61832015) and the Australian Research Council (ARC) under grant Nos. DP160101652 and DP180100691.
%
%
%\subsection{Eigenvalues of Super-operators}
%Usually, the method of analyzing  dynamical properties of classical Markov chains is to uniquely decompose the chain into finite irreducible sub-chains with a transient state set and study these sub-chains which are simple and have nice structures. This is the main techniques of model checking $\trj_P(\mu_{0})$ in~\cite{agrawal2015approximate}.   The corresponding quantum decomposition techniques have also been developed in~\cite{Ying2013,guan2018decomposition,Baumgartner2011}. But as  such decomposition in the quantum case is not unique in general and actually there usually are infinite irreducible quantum sub-chains for a given quantum Markov chain, the method cannot be directly applied on quantum Markov chains. Instead, we employ a new way:  \emph{eigenvalues analysis}. We will see in this paper that the eigenvalues analysis directly deal with general chains without attention on the internal transition  structure,  and significantly simplify the traditional method.

\bibliographystyle{abbrv}
\bibliography{note110418}
\newpage
\section*{Appendices}

\appendix
  \renewcommand{\appendixname}{Appendix~\Alph{section}}

\section{More about Quantum Markov Chains}\label{More-QMC}

QMCs offer an exceptional paradigm for modeling the evolution of quantum systems: 
they 
%for instance, finite-dimensional discrete-time quantum Markov chains 
were first introduced as a model of quantum communicating systems~\cite{wolf2012quantum}, while quantum random walks, a special class of QMCs, have been successfully used to design quantum algorithms (see~\cite{ambainis2003quantum,kempe2003quantum} for a survey of this research line). 
More recently, QMCs were used as a quantum memory model for preserving quantum states~\cite{guan2018decomposition,guan2017super} and as a semantic model for the purpose of verification and termination analysis of quantum programs~\cite{ying2013verification,yu2012reachability,ying2013reachability}. 
$\g=(\h,\e,\rho_0)$ is the most general quantum Markov chain and there also emerged some special cases, such as open quantum random walks~\cite{Attal2012} and classical-quantum Markov chains~\cite{feng2013model}, the latter being classical MCs where the transition probability matrix is replaced by a transition super-operator matrix. 
For studying the dynamical properties, some researchers contributed some interesting results case-by-case. 
For the long-run behavior, \cite{wolf2012quantum,guan2018decomposition} gave some characterizations for limiting states; 
for reachability probabilities, some decomposition techniques of quantum Markov chains in terms of irreducibility and periodicity were obtained in~\cite{ying2013reachability,Baumgartner2011,guan2018decomposition}, and further the reachability, repeated reachability, and persistence probabilities were computed. 
%However model checking dynamical properties of quantum Markov chains has not been attacked, even though  $\omega$-regular and linear-time properties for model checking classical-quantum Markov chains is considered in~\cite{li2015quantum,feng2017model}. 

Model checking quantum systems has been studied in the last 10 years, with the main purpose of verifying quantum communication protocols; for the details, we refer to the review paper~\cite{ying2018model}. 
Recently, some researchers considered model checking quantum automata and developed an algorithm for checking linear-time properties (e.g., invariants and safety properties) in~\cite{ying2014model}. 
Following Birkhoff-von Neumann quantum logic~\cite{birkhoff1936logic}, they used closed subspaces of Hilbert spaces as the atomic propositions about the state of the system, and the specifications were represented by infinite sequences of sets of atomic propositions. 
After that, by adding a classical graph, a special quantum Markov chain, called super-operator-valued quantum Markov chains, was proposed in~\cite{feng2013model} for modeling quantum programs and quantum cryptographic protocols. 
Furthermore, a quantum extension of probabilistic computation tree logic (PCTL) was defined and a model-checking algorithm for this Markov model was developed.

In the following, we present several examples of QMCs. 

\begin{example}Not surprisingly, any classical Markov chain $(\s, P, \mu_{0})$ can be effectively encoded as a QMC:
define $\h$ to be a $|\s|$-dimensional Hilbert space spanned by an orthonormal basis $\setcond{\ket{s}}{s \in \s}$ and $\e$ be a super-operator with Kraus operators 
\[
    \setcond{E_{s,t} = \sqrt{P(s,t)} \ket{t} \bra{s}}{s, t \in \s}.
\]
It is easy to check that $\e$ is completely positive and trace-preserving. 
Furthermore, let $\rho_{0} = \sum_{s \in \s} \mu_{0}(s) \ket{s} \bra{s}$. 
Then the QMC $(\h, \e, \rho_{0})$ fully simulates the behavior of $(\s, P, \mu_{0})$ in the sense that for all $n \geq 0$,
\[
    \e^{n}(\rho_{0}) = \sum_{s \in \s} \mu_{n}(s) \ket{s} \bra{s}
\]
where $\mu_{n} = \mu_{0} P^{n}$, $\e^{0} = \mathrm{id}_{\h}$, and $P^{0} = I$.\end{example}
\begin{example}[Amplitude-damping channel]
Consider the 2-dimensional amplitude-damping channel modeling the physical processes such as spontaneous emission. Let  $\mathcal{H}=\textrm{span}\{|0\rangle, |1\rangle\}$, and
\begin{eqnarray*}
\mathcal{E}(\rho)=E_0 \rho E_0^\dag + E_1 \rho E_1^\dag
\end{eqnarray*}
where $E_{0}=|0\rangle\langle 0|+\sqrt{1-p}|1\rangle\langle 1|$ and $E_{1}=\sqrt{p} |0\rangle\langle 1|$ with $p>0$.
\end{example}
\begin{example}
\label{reducible}
Consider a natural way to encode the classical NOT gate $X: 0\rightarrow 1; 1\rightarrow 0$ into a quantum operation. Let $\mathcal{H}=\textrm{span}\{|0\rangle, |1\rangle\}$. The
super-operator $\mathcal{E}: D(\mathcal{H}) \rightarrow D(\mathcal{H})$ is defined by 
\begin{eqnarray*}
\mathcal{E}(\rho)=|1\rangle\langle 0|\rho|0\rangle \langle 1|+|0\rangle\langle 1|\rho|1\rangle \langle 0|
\end{eqnarray*}
for any $\rho \in D(\mathcal{H})$. It is easy to check that the quantum Markov chain $(\mathcal{H},\mathcal{E})$ is irreducible.
\end{example}

\section{Periodical Stability of Quantum Markov Chains}
\label{sec:periodicstable}

In this section, we give an easily checkable characterization of periodical stability of QMCs, and complete the proof of Lemma~\ref{mainlemmaQMC}.

%Recall that  
%if $(\h,\e)$ is periodically stable, then so is $(\h,\e,\rho)$ for any $\rho\in \dh$. Thus, from the proof of Proposition~\ref{Main prop}, the proof of Proposition~\ref{Main prop 2} is simply driven by noting the operator norm property in Eq.(\ref{eq:operatornorm}). But there still has quantum state $\rho$ (e.g. $\ket{0} $ in Example~\ref{ex:u}) such that $(\h,\e,\rho)$ is periodically stable while $(\h,\e)$ is not in the case. In other words, these  state can hide the unstability of the chain. Next we give a characterization for all such states. 

Let $(\h, \e, \rho_{0})$ be a QMC with $M_{\e}$ being the matrix representation of $\e$ and $M_{\e} = SJS^{-1}$ its Jordan decomposition. 
Furthermore, 
\begin{equation}
	J = \bigoplus_{k=1}^{K} J_{k}(\lambda_{k}) \quad \text{and} \quad J_{k}(\lambda_{k}) = \lambda_{k} P_{k} + N_{k},
	\label{eq:Jordan}
\end{equation}
where $\lambda_{k} \in \spec(\e)$, $P_{k}$ is a projector, and $N_{k}$ the corresponding nilpotent part. 
%    Importantly, Jordan decomposition is unique up to reordering of the matrix blocks $\setnocond{J_{k}(\lambda_{k})}$, so one can find a finite matrix $S$ to transfer $M_{\epsilon}$ into its Jordan decomposition, 
Note that Jordan decomposition is not unique,  and we define 
\[
\alpha(\e) = \inf_{S} \{\norm{S} \cdot \norm{S^{-1}} \ :\ S^{-1}M_\e S \mbox{ is in Jordan normal form}\}
\]
 to be the \emph{Jordan condition number}~\cite{wolf2012quantum} of $\e$. %where $\norm{S} = \max_{\lambda \in \spec(S^{\dag} S)} \sqrt{\lambda}$ is the spectral norm.
From~\cite[Proposition 6.2]{wolf2012quantum}, the geometric multiplicity of any $\lambda_{k} \in \maxmag{\e}$ equals its algebraic multiplicity, i.e., $N_{k} = 0$.
We define %$$J_\infty\coloneqq\sum_{k:\lambda_{k}=1}P_{k}$$
\[
	J_{\phi} = \bigoplus_{k : |\lambda_{k}| = 1} P_{k}
\]
to be the projector onto the eigenspace corresponding to eigenvalues in $\maxmag{\e}$.
By~\cite[Proposition 6.3]{wolf2012quantum}, $SJ_{\phi} S^{-1}$ is indeed the matrix representation of some super-operator $\e_{\phi}$. 
One of the essential results regarding $\e_{\phi}$ is the following lemma from~\cite{wolf2012quantum} when we consider asymptotic properties of $\e$.
\begin{lemma}[{cf.~\cite[Theorem 8.23]{wolf2012quantum}}]
\label{lem:inequality}
	For $n \in \naturals^{+}$ we have
	\[
		C^{-1}\mu^{n} n^{d_{\mu} - 1} \leq \norm{M_{\e}^{n} - M_{\e_{\psi}}^{n}} \leq C\mu^{n} n^{d_{\mu}-1}
	\]
	where $\e_{\psi} = \e \circ \e_{\phi}$, $C$ is a positive constant determined by the Jordan decomposition of $M_{\e}$, $\mu = \sup\setcond{|\lambda|}{\lambda \in \spec(\e), |\lambda| <1}$ is the largest modulus of eigenvalues of $\e$ in the interior of the unit disc, and $d_{\mu}$ is the dimension  of the largest Jordan block corresponding to  eigenvectors of modulus $\mu$. 
	Specifically, 
	\[
	     C = 
			\begin{cases}
				\alpha(\e) & \text{if $d_{\mu} = 1$,}\\
				\alpha(\e)(\mu(d_{\mu} - 1))^{ d_{\mu}-1} & \text{if $1 < d_{\mu} \leq n+1.$} 
			\end{cases}
	\]		
\end{lemma}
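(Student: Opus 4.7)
The plan is to work entirely in the Jordan basis of $M_\e$. The first step is to observe that $J_\phi = \bigoplus_{|\lambda_k| = 1} P_k$ is a diagonal projector commuting with $J$ and satisfying $J_\phi^2 = J_\phi$; consequently $M_\e$ and $M_{\e_\phi}$ commute, so $M_{\e_\psi}^n = (M_\e M_{\e_\phi})^n = M_\e^n M_{\e_\phi}^n = M_\e^n M_{\e_\phi}$. This collapses the difference to
\[
M_\e^n - M_{\e_\psi}^n = M_\e^n (I - M_{\e_\phi}) = S \Bigl( \bigoplus_{|\lambda_k| < 1} J_k(\lambda_k)^n \Bigr) S^{-1},
\]
so the problem reduces to controlling the norm of the block sum over eigenvalues strictly inside the unit disc.

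For the upper bound, I would expand each surviving block by the binomial theorem, using that $P_k$ is the identity on its block and commutes with $N_k$, and that $N_k^{d_k} = 0$:
\[
J_k(\lambda_k)^n = \sum_{j = 0}^{\min(n,\, d_k - 1)} \binom{n}{j} \lambda_k^{\,n - j} N_k^{\,j}.
\]
Applying the triangle inequality, using $\binom{n}{j} \leq n^j / j!$, $|\lambda_k| \leq \mu$, $d_k \leq d_\mu$, and the Jordan-form estimate $\norm{N_k} \leq 1$, the dominant asymptotic contribution comes from $j = d_\mu - 1$ and $|\lambda_k| = \mu$. Factoring out $\mu^n n^{d_\mu - 1}$ and collecting the combinatorial remainder yields a bound of the form $\norm{\bigoplus J_k(\lambda_k)^n} \leq (\mu(d_\mu - 1))^{d_\mu - 1}\,\mu^n n^{d_\mu - 1}$, valid once $n \geq d_\mu - 1$ (the hypothesis $d_\mu \leq n+1$). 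Conjugating by $S$ costs a factor of $\norm{S}\norm{S^{-1}}$, and taking the infimum over all Jordan similarities replaces this factor by $\alpha(\e)$, producing the advertised $C$.

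For the lower bound, I would exhibit a specific unit vector that realizes the growth rate. Pick a Jordan block $J_k$ of size exactly $d_\mu$ whose eigenvalue satisfies $|\lambda_k| = \mu$, and evaluate on the ``top'' basis vector $e_{d_\mu}$ of that block. Since $N_k^{\,d_\mu - 1} e_{d_\mu} = e_1$, the component of $J_k(\lambda_k)^n e_{d_\mu}$ along $e_1$ has magnitude $\binom{n}{d_\mu - 1} \mu^{\,n - d_\mu + 1} = \Theta(\mu^n n^{d_\mu - 1})$. Pulling this vector back through $S^{-1}$ and pushing its image through $S$ loses at most a factor of $\norm{S} \cdot \norm{S^{-1}}$ in the denominator, and taking the infimum over Jordan similarities inserts the matching $\alpha(\e)^{-1}$, yielding the left-hand inequality with the same $C$.

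The main obstacle I expect is the bookkeeping needed to produce the \emph{explicit} constant $C$ rather than a generic $\bigO{\mu^n n^{d_\mu - 1}}$ estimate. One has to track carefully how the binomial coefficients, the sub-maximal-modulus blocks, and the nilpotent norms interact, and to justify that the infimum over $S$ defining $\alpha(\e)$ can be pulled outside the block-by-block bound. The split into two cases in the definition of $C$ corresponds to whether $d_\mu = 1$ (all dominant blocks are $1 \times 1$, no nilpotent contribution, and $C$ collapses to $\alpha(\e)$) or $d_\mu > 1$ (a nilpotent tail of length $d_\mu - 1$ contributes the factor $(\mu(d_\mu - 1))^{d_\mu - 1}$); the hypothesis $d_\mu \leq n+1$ is exactly what guarantees that $\binom{n}{d_\mu - 1}$ is non-degenerate, so that the lower-bound witness actually achieves the claimed polynomial rate.
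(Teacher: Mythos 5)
The paper does not prove this lemma at all: it is imported (``cf.'') from \cite[Theorem 8.23]{wolf2012quantum}, so you are supplying an argument where the paper supplies only a citation. Your skeleton is the standard one and is surely how the cited result is established: $M_{\e}$ and $M_{\e_{\phi}}$ commute and $M_{\e_{\phi}}$ is idempotent, so $M_{\e}^{n}-M_{\e_{\psi}}^{n}=S\bigl(\bigoplus_{|\lambda_{k}|<1}J_{k}(\lambda_{k})^{n}\bigr)S^{-1}$, the upper bound comes from the binomial expansion of the surviving blocks, and the lower bound from a witness vector in a Jordan block of size $d_{\mu}$ with eigenvalue of modulus $\mu$. (One simplification: you do not need $N_{k}=0$ on the peripheral blocks for this reduction, since $J^{n}(I-J_{\phi})$ annihilates them regardless.)

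The genuine gap is the step you call ``collecting the combinatorial remainder.'' The expansion gives $\norm{J_{k}(\lambda_{k})^{n}}\leq\sum_{j=0}^{d_{k}-1}\binom{n}{j}\mu^{n-j}$, whose dominant term $\binom{n}{d_{\mu}-1}\mu^{n-d_{\mu}+1}\approx\frac{1}{(d_{\mu}-1)!}\mu^{-(d_{\mu}-1)}\mu^{n}n^{d_{\mu}-1}$ carries a factor that scales like $\bigl(e/(\mu(d_{\mu}-1))\bigr)^{d_{\mu}-1}$ --- the \emph{reciprocal} of the $(\mu(d_{\mu}-1))^{d_{\mu}-1}$ you claim to extract. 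Concretely, for $d_{\mu}=2$ the $j=1$ term alone is $n\mu^{n-1}$, which exceeds your asserted bound $(\mu(d_{\mu}-1))^{d_{\mu}-1}\mu^{n}n=\mu^{n+1}n$ for every $\mu<1$ and every $n$, so the inequality you state as the outcome of the bookkeeping is false. A correct computation produces a constant with exponent $1-d_{\mu}$ on $\mu(d_{\mu}-1)$, which suggests the exponent printed in the lemma is itself a transcription slip from \cite{wolf2012quantum}; by asserting the remainder rather than computing it, your sketch lands exactly on the misprinted constant without ever deriving it. Two smaller repairs are also needed: in the lower bound, $\norm{S}\cdot\norm{S^{-1}}\geq\alpha(\e)$ points the wrong way for an arbitrary Jordan similarity, so you must pass to a near-optimal $S$ and observe that $\bigoplus_{|\lambda_{k}|<1}J_{k}^{n}$ is independent of that choice; and you should verify $\binom{n}{d_{\mu}-1}\geq c\,n^{d_{\mu}-1}$ quantitatively under the hypothesis $n+1\geq d_{\mu}$ rather than only asymptotically, since the lemma is claimed for all such $n$.
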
 
For each $1 \leq k \leq \dim(\h)^{2}$, let  $\ket{s_{k}}$ be the $k$-th column of $S$; 
that is, $\ket{s_{k}} = \sum_{j} S_{j,k} \ket{j}$.
As $S$ is invertible, $\ket{s_{k}}$'s constitute a basis of the Hilbert space $\h^{\otimes 2} = \h \otimes \h$, and thus the vectorization of any quantum  state $\rho \in \dh$ can be uniquely represented as a linear combination of them: 
$ \ket{\rho} = \sum_{k} a_{k} \ket{s_{k}}$. 
Let
\[
	\maxmag{\rho} = \setcond{\lambda \in \maxmag{\e}}{\text{$M_{\e} \ket{s_{k}} = \lambda \ket{s_{k}}$ for some $k$ with $a_{k} \neq 0$}}
\]
be the set of eigenvalues of $M_{\e}$ with magnitude $1$ which contributes non-trivially to $\ket{\rho}$.  
%As the Jordan blocks of eigenvalues with magnitude one is one-dimensional, for any $\lambda\in \maxmag{\e}$, there is a corresponding eigenvector $\ket{s_{k}}$ for some $k$. 
%Let
%\[
%	K_{us} = \{1 \leq k \leq \dim(\h)^{2}  \ :\  M_{\e} \ket{s_{k}}=e^{2\pi i \psi}\ket{s_{k}} \textrm{ and } \psi \mbox{ is irrational} \}.
%\]
Then we have the following lemma. %, which is an analogy of Lemma~\ref{Lem:characterperiodicallystable} for super-operators.
\begin{lemma}
\label{lem:periodicstable}
	A QMC $(\h, \e, \rho_{0})$ is periodically stable if and only if $\maxmag{\rho_{0}}$ does not contain any element of the form $e^{i 2 \pi \psi}$ for some irrational number $\psi$.  
\end{lemma}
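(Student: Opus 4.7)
My plan is to use the Jordan decomposition $M_{\e} = S J S^{-1}$ to split the action of $\e^{n \theta}$ on $\rho_{0}$ into independent contributions from the two natural groups of eigenvalues: those strictly inside the unit disc and those on the unit circle. Expanding $\ket{\rho_{0}} = \sum_{k} a_{k} \ket{s_{k}}$ in the Jordan basis, the first group contributes a term whose norm is dominated by $C \mu^{n \theta} (n \theta)^{d_{\mu}-1}$ via Lemma~\ref{lem:inequality}, which tends to $0$ as $n \to \infty$ for every fixed integer $\theta > 0$. Therefore the existence of $\lim_{n} \e^{n \theta}(\rho_{0})$ reduces, in the vectorized picture, to the existence of the limit of the peripheral contribution.

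For the peripheral contribution, I would use the fact, already invoked in the excerpt and established in~\cite[Proposition 6.2]{wolf2012quantum}, that every $\lambda_{k} \in \maxmag{\e}$ has trivial nilpotent part. Thus $M_{\e}^{n \theta} \ket{s_{k}} = \lambda_{k}^{n \theta} \ket{s_{k}}$ for the corresponding basis vectors, and the peripheral contribution reads $\sum_{k \,:\, |\lambda_{k}| = 1} a_{k} \lambda_{k}^{n \theta} \ket{s_{k}}$. Since the $\ket{s_{k}}$'s are linearly independent, this sum converges as $n \to \infty$ if and only if the scalar sequence $\lambda_{k}^{n \theta}$ converges for every $k$ with $a_{k} \neq 0$, which is precisely the set of indices corresponding to $\maxmag{\rho_{0}}$.

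It then remains to analyze when $\lim_{n} \lambda^{n \theta}$ exists for $\lambda = e^{i 2 \pi \psi}$. If $\psi = p/q$ is rational in lowest terms, taking $\theta$ to be any common multiple of all such denominators appearing in $\maxmag{\rho_{0}}$ makes $\lambda^{\theta} = 1$ for every such $\lambda$, so every peripheral scalar sequence is constant and periodic stability holds. If instead some $\lambda \in \maxmag{\rho_{0}}$ has irrational $\psi$, then $\psi \theta$ is irrational for every integer $\theta > 0$, and Weyl's equidistribution theorem (already invoked in Example~\ref{ex:u}) makes $\setcond{e^{i 2 \pi \psi n \theta}}{n \in \naturals}$ dense in the unit circle; then $\lambda^{n \theta}$ diverges, and again by linear independence the full peripheral sum cannot converge, ruling out periodic stability.

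The main obstacle here is not analytic but rather about pinpointing the correct set of eigenvalues: the subtlety of the lemma is that one must track which eigenvalues in $\maxmag{\e}$ are actually ``activated'' by $\rho_{0}$, namely those in $\maxmag{\rho_{0}}$, and not the full peripheral spectrum. Once this identification is made, the rest follows from the cleanly decoupled picture provided by the Jordan normal form, the elementary dichotomy between rational and irrational rotation numbers, and the fact that the linear isomorphism $\rho \mapsto \ket{\rho}$ transfers convergence in the vectorized picture back to convergence of density operators with no additional work.
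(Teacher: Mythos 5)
Your proof is correct, and its skeleton---vectorize, expand $\ket{\rho_{0}}$ in a Jordan basis of $M_{\e}$, discard the interior-spectrum contribution, reduce the peripheral contribution to convergence of the scalar sequences $\lambda^{n\theta}$, and settle that by the rational/irrational dichotomy---is the same as the paper's. Where you genuinely diverge is in the middle: the paper does not argue directly on the peripheral eigenvectors of $M_{\e}$. It first replaces $\e$ by $\p\circ\e$ (which by~\cite{guan2017super} has the same peripheral eigenvalues and eigenvectors), so as to assume a faithful channel, then uses the block-diagonal Kraus form to split $M_{\e}=\bigoplus_{k,l}M_{k,l}$ and invokes the structure theorem $\maxmag{M_{k,l}}=\setnocond{\exp(i2\pi(r+\psi_{k,l})/N_{k,l})}_{r=0}^{N_{k,l}-1}$ before applying the same rationality criterion blockwise. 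Your shortcut is legitimate: the only inputs you need are that peripheral eigenvalues of a CPTP map carry no nilpotent part (\cite{wolf2012quantum}, which you cite) and that the coordinate functionals with respect to the linearly independent $\ket{s_{k}}$ are continuous in finite dimension, so convergence of the peripheral sum forces convergence of each activated scalar sequence $\lambda_{k}^{n\theta}$, and conversely. What the paper's heavier route buys is not the equivalence itself but the explicit arithmetic form of the peripheral spectrum, which is exactly what makes Corollary~\ref{cor:pstable} (the period $p(\g)$ as an $\lcm$ of denominators) and the subsequent complexity bounds immediate; with your argument alone you would still need that structural fact, or an equivalent observation, to actually compute the period later. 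A cosmetic point: the density of $\setcond{e^{i2\pi\psi\theta n}}{n\in\naturals}$ for irrational $\psi\theta$ is the elementary Kronecker density statement the paper cites from~\cite{hardy1979introduction}; full Weyl equidistribution is more than you need.
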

\begin{proof}
 %The proof is similar to that of Lemma~\ref{Lem:characterperiodicallystable}.
 First, note that for any $m > 0$ and $\rho \in \dh$, $\ket{\e^{m}(\rho)} = M^{m}_{\e} \ket{\rho}$. 
  Thus $(\h, \e, \rho_{0})$ is periodically stable if and only if there exists an integer $\theta > 0$ such that $\lim_{n \to \infty} M_{\e}^{n \theta} \ket{\rho_{0}}$ exists. 
  Let $\ket{\rho_{0}} = \sum_{j} a_{j} \ket{s_{j}}$. 
  For any $1 \leq j \leq \dim(\h)^{2}$, if $\ket{s_j}$ is an (generalized) eigenvector of $M_{\e}$ corresponding to an eigenvalue with magnitude strictly smaller than $1$, then $\lim_{n \to \infty} M_{\e}^{n\theta} \ket{s_{j}} = 0$ for any $\theta$. 
  Thus we only need to care about $\ket{s_j}$ corresponding to eigenvalues with magnitude one. 
  Following~\cite[Lemma 2]{guan2017super}, $\p \circ \e$ shares with $\e$ the same eigenvalues with magnitude $1$ and the corresponding eigenvectors, where $\p(\functiondot) = P \cdot P$, and $P$ is the projector onto the support of the maximal stationary state, no other stationary state supported in it. 
  Therefore, w.l.o.g, we assume $\e$ has a full-rank stationary state. 
  This kind of $\e$ is called faithful in~\cite{Albert2018}.

  Furthermore, for faithful $\e$, the Kraus operators $\setnocond{E_{i}}$ admit a diagonal form with respect to an appropriate decomposition $\h = \oplus_{k=1}^{m} \h_{k}$~\cite{ying2013reachability}. 
  To be specific, for all $i$,
  \[
    E_{i} = \oplus_{k=1}^{m} E_{i,k} = 
      \left[
        \begin{array}{cccc}
          E_{i,1} & & &\\
          & E_{i,2} & & \\
          & & \ddots & \\
          & & & E_{i,m} 
        \end{array}
      \right]
  \]
  where $E_{i,k}\in \l(\h_k)$,
  so $M_{\e}$ has the corresponding structure
  \[
    M_{\e} = \bigoplus_{k,l} M_{k,l}
  \]
  where $M_{k,l} = \sum_{i} E_{i,k} \otimes E_{i,l}^{*}$. Furthermore, for any $\theta >0$ we have
  \begin{equation}
    \lim_{n \to \infty} M_{\e}^{n\theta} \ket{\rho_{0}} = \bigoplus_{k,l} \lim_{n \to \infty} M_{k,l}^{n\theta} \ket{\rho_{0,k,l}},
    \label{eq:tmp}
  \end{equation}
  where $\ket{\rho_{0,k,l}}$ is the restriction of $\ket{\rho_{0}}$ onto $\h_{k} \otimes \h_{l}$, i.e., $\ket{\rho_{0}} = \oplus_{k,l} \ket{\rho_{0,k,l}}$.
  Now it is to see that $\lim_{n \to \infty} M_{\e}^{n\theta} \ket{\rho_{0}}$ exists if and only if for any $k$ and $l$, $\lim_{n \to \infty} M_{k,l}^{n\theta} \ket{\rho_{0,k,l}}$ exists.

  To verify the existence of $\lim_{n \to \infty} M_{k,l}^{n\theta} \ket{\rho_{0,k,l}}$, let 
  \[
    \ket{\rho_{0}} = \sum_{j} a_{j} \ket{s_{j}} = \bigoplus_{k,l} \sum_{j} a_{j} \ket{s_{j,k,l}},
  \]
  where $\ket{s_{j,k,l}}$ is the restriction of $\ket{s_{k}} $ onto $\h_{k} \otimes \h_{l}$. 
  Then $\ket{\rho_{0,k,l}} = \sum_{j} a_{j} \ket{s_{j,k,l}}$. Define 
  \[
    \maxmag{\rho_{0,k,l}} = \setcond{\lambda \in \maxmag{M_{k,l}}}{\text{$M_{k,l} \ket{s_{j,k,l}} = \lambda \ket{s_{j,k,l}}$ for some $j$ with $a_{j} \neq 0$}}.
  \]
  For any $1 \leq j \leq \dim(\h)^{2}$ and $1 \leq k,l \leq m$, we have two cases to consider:
  \begin{itemize}
  \item 
    If $\ket{s_{j,k,l}}$ is a (generalized) eigenvector of $M_{k,l}$ corresponding to an eigenvalue with magnitude strictly smaller than $1$, then $\lim_{n \to \infty} M_{\e}^{n\theta_{j}} \ket{s_{j,k,l}} = 0$ for any $\theta_{j}$. 
  \item 
    If $\ket{s_{j,k,l}}$ is an eigenvector of $M_{i,j}$ corresponding to an eigenvalue with magnitude $1$, then $M_{k,l}^{n} \ket{s_{j,k,l}} = e^{i 2 \pi \psi_{j,k,l} n} \ket{s_{j,k,l}}$ for some $\psi_{j,k,l}$. 
    Thus we have $\lim_{n \to \infty} M_{\e}^{n\theta_{j}}\ket{s_{j,k,l}}$ exists for some $\theta_{j} > 0$ if and only if $\psi_{j,k,l}$ is rational and $\theta_{j} \psi_{j,k,l}$ is an integer~\cite{hardy1979introduction}.
  \end{itemize}

  Note that the matrices $M_{k,l}$'s have the following spectral properties (cf.~\cite{guan2018structure}):
  for any $k$ and $l$, $\maxmag{M_{k,l}} = \emptyset$ or $\maxmag{M_{k,l}} = \setnocond{\exp(i 2 \pi (r + \psi_{k,l})/N_{k,l})}_{r=0}^{N_{k,l}-1}$, where $N_{k,l}$ is a positive integer and $\psi$ is a real number. 
  Thus $\lim_{n \to \infty} M_{k,l}^{n\theta} \ket{\rho_{0,k,l}}$ exists if and only if $\maxmag{\rho_{0,k,l}}$ does not contain any element of the form $e^{i 2 \pi \psi}$ for some irrational number $\psi$. 
  We complete the proof by noting that $\maxmag{\rho} = \setcond{\lambda \in \maxmag{\rho_{0,k,l}}}{1 \leq k,l \leq m}$.
  %For any $k$ and $l$, let \[\maxmag{\rho_{0,k,l}} = \setcond{\lambda \in \maxmag{\e}{\text{$M_{\e} \ket{s_{k,l}} = \lambda \ket{s_{k,l}}$ for some $k$ with $a_{k} \neq 0$}}
  %\]
  %\begin{itemize}
  % \item  if $\maxmag{\sum_{i} E_{i,k} \otimes E_{i,l}^{*}} = \emptyset$, then $\lim_{n \to \infty}(\sum_{i}E_{i,k}\otimes E_{i,l}^{*})^{n\theta}\ket{\rho_{0,k,l}}=0$;\\
  % \item Let $\kappa(\rho_{K,l})=$if $\kappa(\sum_{i}E_{i,k}\otimes E_{i,l}^{*})=\{e^{i\psi+\frac{i2\pi r}{N_{k,l}} }\}_{r=0}^{N_{k,l}-1}$, then $\lim_{n \to \infty}(\sum_{i}E_{i,k}\otimes E_{i,l}^{*})^{n\theta}\ket{\rho_{0,k,l}}=0$ exists for some $\theta$ if and only if $\psi$ is rational.
  %\end{itemize}
  \qed
\end{proof}
%
%With the above lemma and Lemma~\ref{Lem:characterperiodicallystable}, we can see that $\e$ is periodically stable if and only if $(\h,\e,\rho)$ is periodically stable for all $\rho\in \dh$. That is the statement of Lemma~\ref{lem_twoperiodicalstbale}.

%From Lemma~\ref{lem:periodicstable}, the periodicity of quantum Markov chains cannot be checked by any computer, as the number stored in it must be rational. 
%Fortunately, some interesting  practical quantum systems can always be modeled by periodical quantum Markov chains (see the example in Section~\ref{sec:applications}). 

The proof of Lemma~\ref{lem:periodicstable} gives us a way to compute the period of a periodically stable QMC.
\begin{corollary}
\label{cor:pstable}
	Let $\g = (\h, \e, \rho_{0})$ be a periodically stable QMC. 
	\begin{itemize}
	\item 
		If we rewrite $\maxmag{\rho_{0}}$ in the form
		\[
			\maxmag{\rho_{0}} = \setcond{e^{2 \pi i p_{k}/q_{k}}}{\text{$p_{k}$ and $q_{k}$ are coprime positive integers}}_{k},
		\]
		then $p(\g) = \lcm\setnocond{q_{k}}_{k}$.
	\item	
		For any integer $\theta > 0$, $\lim_{n \to \infty} \e^{n\theta}(\rho_{0})$ exists if and only if $p(\g)$ is a divisor of $\theta$.
	\end{itemize}
\end{corollary}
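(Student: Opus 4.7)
The plan is to read off the corollary directly from the internal mechanics of the proof of Lemma~\ref{lem:periodicstable}. That lemma decomposes $\ket{\rho_{0}} = \sum_{j} a_{j} \ket{s_{j}}$ in the Jordan basis of $M_{\e}$ and shows that (i) contributions from $\ket{s_{j}}$ whose corresponding eigenvalue has modulus strictly below $1$ vanish in the limit, while (ii) eigenvalues of modulus $1$ carry no nilpotent part (by~\cite[Proposition 6.2]{wolf2012quantum}), so $M_{\e}^{n\theta}\ket{s_{j}} = e^{2\pi i \psi_{j} n\theta}\ket{s_{j}}$ whenever $\lambda_{j} = e^{2\pi i \psi_{j}}$. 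Therefore $\lim_{n\to\infty} M_{\e}^{n\theta}\ket{\rho_{0}}$ exists if and only if, for every $\lambda_{j}\in\maxmag{\rho_{0}}$, the sequence $(e^{2\pi i \psi_{j} n\theta})_{n}$ converges.

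The next step is to translate that convergence into a divisibility condition. Since $\g$ is periodically stable, Lemma~\ref{lem:periodicstable} guarantees that every $\psi_{j}$ is rational, so we may write $\lambda_{j} = e^{2\pi i p_{j}/q_{j}}$ with $\gcd(p_{j},q_{j}) = 1$, $q_{j} \geq 1$. For each $j$ the values $(\lambda_{j}^{n\theta})_{n}$ lie in the finite subgroup $\setcond{e^{2\pi i k/q_{j}}}{0 \leq k < q_{j}}$, so convergence is equivalent to being eventually constant, which is equivalent to $\lambda_{j}^{\theta} = 1$, i.e.\ $q_{j}$ divides $p_{j}\theta$. Coprimality of $p_{j}$ and $q_{j}$ then forces $q_{j}\mid\theta$.

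Combining the two observations, $\lim_{n\to\infty}\e^{n\theta}(\rho_{0})$ exists if and only if $q_{j}\mid\theta$ for every index $j$, equivalently $\lcm\setnocond{q_{k}}_{k}\mid\theta$. The set of admissible periods is thus exactly the set of positive multiples of $\lcm\setnocond{q_{k}}_{k}$, which immediately yields both conclusions: the second bullet is the divisibility statement just derived, and the first bullet records the minimum element $p(\g) = \lcm\setnocond{q_{k}}_{k}$ of this set.

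The only genuinely delicate point is the reduction from "convergence" to "eventually constant" for the phases $e^{2\pi i p_{j} n\theta/q_{j}}$; this is straightforward because each sequence is periodic in $n$, so a convergent periodic sequence must be constant past some index, and hence from the outset since the sequence is also a group orbit. All the heavy spectral structure (diagonal block form of $M_{\e}$, $\maxmag{M_{k,l}}$ being a coset of a cyclic group of roots of unity, and the decomposition $\ket{\rho_{0}} = \bigoplus_{k,l}\ket{\rho_{0,k,l}}$) has already been handled inside the proof of Lemma~\ref{lem:periodicstable}, so no further structural work is needed here; the corollary is essentially a bookkeeping consequence packaged for later use.
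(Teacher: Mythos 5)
Your proof is correct and matches the paper's intent exactly: the paper gives no separate argument for this corollary, stating only that it follows from the proof of Lemma~\ref{lem:periodicstable}, and your bookkeeping (peripheral eigenvalues are diagonalizable, phases $e^{2\pi i p_k\theta/q_k}$ converge iff they are constant iff $q_k\mid\theta$, hence the admissible $\theta$ are exactly the multiples of $\lcm\setnocond{q_{k}}_{k}$) is precisely the intended derivation.
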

%\begin{proof}
%	This is directly from the proof of Lemma~\ref{Lem:characterperiodicallystable}.
%\end{proof}

%Obviously, $p(\e)$ is different with $p(\rho)$. If $(\h,\e)$ is  periodically stable, then $p(\e)$ is a multiple of $p(\rho)$.

%We are now ready to prove Lemma~\ref{mainlemmaQMC}.
In the end, we present the proof of Lemma~\ref{mainlemmaQMC}

\begin{proof} 
  Let $\theta = p(\g)$.
  By the proof of Lemma~\ref{lem:periodicstable}, $\lim_{n \to \infty} \e^{n\theta+k}(\rho_{0})=\e_{\phi}(\e^{k}(\rho_{0}))$ for any integer $0\leq k\leq \theta-1$, i.e.
  $$\lim_{n \to \infty} \e^{n}(\rho_{0})=\e_{\phi}(\e^{n \modulo \theta }(\rho_{0})).$$ 
  Thus for any $\epsilon>0$, there exists a positive integer $K^\epsilon$ such that for all $n>K^\epsilon$, 
  \[\norm{ \e^{n}(\rho_{0})-\e_{\phi}(\e^{n \modulo \theta }(\rho_{0}))}<\epsilon\]
  as desired.
  %Then for any $\epsilon > 0$, there exists $M^{\epsilon}$ such that for all $m > M^{\epsilon}$,
  %\[
  % \norm{\e^{m\theta}(\rho_{0}) - \e_{\phi}(\rho_{0})} < \epsilon.
  %\]
  %Let $K^{\epsilon} = M^{\epsilon} \theta$. 
  %For any $n > K^{\epsilon}$, let $k = n \modulo \theta$ and $n = m \theta + k$. 
  %Then $m > M^{\epsilon}$.
  %From Eqs.~\eqref{eq:contractive} and~\eqref{eq:operatornorm}, we have 
  %\[
  % \norm{\e^{n}(\rho_{0}) - \eta_{n \modulo \theta}} = \norm{\e^{m \theta + k}(\rho_{0}) - \eta_{k}} \leq \norm{\e^{m \theta}(\rho_{0}) - \e_{\phi}(\rho_{0})} < \epsilon
  %\]
  %as desired.
  %From Eqs.(\ref{eq:operatornorm}) and (\ref{eq:contractive}), we have 
  %$$\|\e^{n}(\rho_{0})-\eta_{n \textrm{ mod } \theta}\| = \|\e^{m\theta+k}(\rho_{0})-\eta_{k}\|\leq \|\e^{m\theta}(\rho_{0})-\e_{\phi}(\rho_{0})\| <\epsilon.$$
  %Therefore, we have
  %$$\lim_{n \to \infty}\e^{n}(\rho_{0})=\eta_{n\textrm{ mod } p(\g)}$$
  %as desired.

  Corollary~\ref{cor:pstable} provides a method to obtain $p(\g)$ by computing the Jordan decomposition of $\e$, of which the time complexity is $\bigO{n^{8}}$. 
  To determine $K^{\epsilon}$, we recall from Lemma~\ref{lem:inequality} that
  \[
    C^{-1} \mu^{n} n^{d_{\mu}-1} \leq \norm{M_{\e}^{n} - M_{\e_{\psi}}^{n}} \leq C \mu^{n} n^{d_{\mu}-1}.
  \]
  Let $n=m\theta+k$ with $0\leq k\leq\theta-1$, and  note that $\e_{\psi}^{m \theta}(\rho_{0}) = \e_{\phi}(\rho_{0})$ by Corollary~\ref{cor:pstable}. We have
  \[\norm{\e^{m \theta+k}(\rho_{0}) - \e_{\phi}(\e^k(\rho_{0}))} = \norm{\e^{m \theta+k}(\rho_{0}) - \e_{\psi}^{m \theta+k}(\rho_{0})} \leq \norm{\e^{m\theta} - \e_{\psi}^{m\theta}}\]
  where the inequality follows from  Eqs.~\eqref{eq:contractive} and~\eqref{eq:operatornorm}. Let $K^\epsilon=M^\epsilon\theta.$
  So we can simply set $M^{\epsilon}$ to be the minimal integer satisfying 
  \begin{eqnarray}
    \text{$C \mu^{M^{\epsilon} \theta} {(M^{\epsilon} \theta)}^{d_{\mu}-1} < \epsilon$ \quad and \quad $M^{\epsilon} \theta + 1 > d_{\mu}$},
    \label{Eq_Me}
  \end{eqnarray}
  where the second inequality comes from the requirement of $C$ in Lemma~\ref{lem:inequality}. Finally, the computation of $C$ boils down to the Jordan decomposition of $\e$, which make the time complexity of calculating $K^{\epsilon} = M^{\epsilon} \theta$ be $\bigO{n^{8}}$.
  \qed
\end{proof}
\section{Reduction from Problem~\ref{prob:quantum} to Problem~\ref{prob:approx}}\label{Sec:reduction}
Let 
\[
    \g = (\h,\e)
    \qquad
    \g' = (\h^{\otimes 2}, \e \otimes \mathrm{id}_{\h}, \ket{\Omega} \bra{\Omega}/d)
\]
where $d = \dim(\h)$ and $\ket{\Omega} = \sum_{k=1}^{d} \ket{k}\ket{k}$, where $\setnocond{\ket{k}}$ is an orthonormal basis of $\h$. 
Let $\lfst \colon \dh \to 2^{\ap}$ be the labeling function such that for any $\f \in \sh$,
\[
    \lfst((\f \otimes \mathrm{id}_{\h})(\ket{\Omega} \bra{\Omega}/d)) = \lfdy(\f).
\]
Then it is easy to check that $\lfst(\trjst(\g')) = \lfdy(\trjdy(\g))$. 
Furthermore, as the following function is bijective:
\begin{eqnarray*}
    \sh \to D(\h^{\otimes 2}) \colon \f \mapsto (\f \otimes \mathrm{id}_\h)(\ket{\Omega}\bra{\Omega})/d,
\end{eqnarray*}
$\epsneigh(\trjst(\g')) = \epsneigh(\trjdy(\g))$ for all $\epsilon$ if we choose a norm on $\sh$ as $\norm{\e - \f}_{\diamond} = \norm{(\e \otimes \mathrm{id}_{\h})(\ket{\Omega}\bra{\Omega})/d - (\f \otimes \mathrm{id}_{\h})(\ket{\Omega}\bra{\Omega})/d}$ and $\epsneigh(\f) = \setcond{\lfdy(\f') \in 2^{\ap}}{\f' \in \sh, \norm{\f - \f'}_{\diamond} < \epsilon}$.
\section{Solving Problem~\ref{prob:quantum}}\label{Section:solvemanybody}

Similar to Lemma~\ref{lem:periodicstable} and Corollary~\ref{cor:pstable}, we can show that $\e$ is periodically stable if and only if $\e$ does not have any eigenvalue of the form $e^{i 2 \pi \psi}$ for some irrational number $\psi$.
Furthermore, if $\e$ is periodically stable and $\maxmag{\e} = \setcond{e^{2 \pi i p_{k}/q_{k}}}{\text{$p_{k}$ and $q_{k}$ are coprime positive integers}}_{k}$, then $p(\e) = \lcm\setnocond{q_{k}}_{k}$. 
Finally, for any integer $\theta$, $\lim_{n \to \infty} \e^{n \theta}$ exists if and only if $p(\e)$ is a divisor of $\theta$.

Recall that a super-operator $\e \in \sh$ is called \emph{irreducible} if it has only one full-rank stationary state. 
From~\cite[Theorem 6.6]{wolf2012quantum}, the peripheral (magnitude $1$) spectrum of such an $\e$ has a nice structure: 
$\maxmag{\e} = \setcond{e^{2 \pi i k/m}}{1 \leq k \leq m}$ for some integer $m \leq \dim(\h)^{2}$. 
Thus $\e$ is periodically stable and $p(\e) = m$.

The following lemma, which is analogous to Lemma~\ref{mainlemmaQMC} for QMCs, is crucial.

\begin{lemma}
\label{lem:mainlemmaSupOp}
    Given a periodically stable super-operator $\e$ with period $p(\e)$, let $\e_{k} = \e_{\phi} \circ \e^{k}$, for $0 \leq k < p(\e)$.
    Then for any $\epsilon > 0$, there exists an integer $K^{\epsilon} > 0$ such that for any $n \geq K^{\epsilon}$,
    \[
        \lfdy(\e^{n}) \in \epsneigh(\e_{n \modulo p(\e)})
    \]
    where $\epsneigh(\f) = \setcond{L(\f') \in 2^{\ap}}{\f' \in \sh, \norm{\f - \f'} < \epsilon}$. Furthermore, the time complexities of computing $p(\e)$ and $K^{\epsilon}$ are both in $\bigO{d^{8}}$, where $d = \dim(\h)$.
\end{lemma}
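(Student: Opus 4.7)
The plan is to mirror the proof of Lemma~\ref{mainlemmaQMC} but to work directly at the super-operator level, using the characterization of periodical stability for super-operators stated just before the lemma. First I would fix $\theta = p(\e)$ and show that $\lim_{n \to \infty} \e^{n\theta + k} = \e_\phi \circ \e^k$ in $\sh$ for every $0 \leq k < \theta$. As in Lemma~\ref{lem:periodicstable}, this comes from decomposing $M_\e$ in its Jordan basis and noting that the contribution of eigenvalues of modulus strictly less than $1$ vanishes under high powers, while on the peripheral eigenspace $M_\e^{m\theta}$ acts as the identity for every $m$, because $\theta$ is by construction a common multiple of all denominators appearing in $\maxmag{\e} = \setnocond{e^{2\pi i p_k / q_k}}_{k}$.

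Next, for the quantitative bound I would write $n = m\theta + k$ and use the identity $\e_\psi^{m\theta} = \e_\phi$ as super-operators (not merely evaluated at a fixed $\rho_0$), which follows from the same argument as Corollary~\ref{cor:pstable} applied to $M_\e$ itself: on the non-peripheral part both sides vanish, and on the peripheral part $\e_\psi = \e \circ \e_\phi$ acts as $\e$ while $\e^{m\theta}$ acts as the spectral projector $\e_\phi$. Combining this identity with Eq.~\eqref{eq:contractive} yields
\[
    \norm{\e^{m\theta + k} - \e_\phi \circ \e^k} = \norm{(\e^{m\theta} - \e_\phi) \circ \e^k} \leq \norm{\e^{m\theta} - \e_\psi^{m\theta}},
\]
and Lemma~\ref{lem:inequality} then bounds the right-hand side by $C \mu^{m\theta}(m\theta)^{d_\mu - 1}$, which decays geometrically since $\mu < 1$. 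Choosing $M^\epsilon$ to be the smallest positive integer satisfying $C \mu^{M^\epsilon \theta}(M^\epsilon \theta)^{d_\mu - 1} < \epsilon$ together with $M^\epsilon \theta + 1 > d_\mu$, and setting $K^\epsilon = M^\epsilon \theta$, gives $\norm{\e^n - \e_{n \modulo \theta}} < \epsilon$ for every $n \geq K^\epsilon$; by the definition of $\epsneigh(\functiondot)$ recorded in the statement, this is exactly $\lfdy(\e^n) \in \epsneigh(\e_{n \modulo \theta})$.

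The complexity claim then follows by reusing the accounting in Lemma~\ref{mainlemmaQMC}: the period $p(\e)$, the modulus $\mu$, the Jordan block size $d_\mu$, and the Jordan condition number entering $C$ are all read off from a single Jordan decomposition of the $d^2 \times d^2$ matrix $M_\e$, which costs $\bigO{(d^{2})^{4}} = \bigO{d^{8}}$; the two scalar inequalities defining $M^\epsilon$ are then solved in time negligible compared with this.

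The main obstacle I anticipate is promoting the pointwise statement $\e_\psi^{m\theta}(\rho_0) = \e_\phi(\rho_0)$ that sufficed in the state-based Lemma~\ref{mainlemmaQMC} to the full super-operator identity $\e_\psi^{m\theta} = \e_\phi$; this requires checking that the spectral reasoning of Lemma~\ref{lem:periodicstable} and Corollary~\ref{cor:pstable} goes through uniformly on the whole peripheral eigenspace of $M_\e$, rather than only on the particular eigenvectors appearing in some expansion $\ket{\rho_0} = \sum_k a_k \ket{s_k}$. Once one verifies that $p(\e)$, defined via $\maxmag{\e}$, simultaneously annihilates every peripheral Jordan block, the remainder of the proof is a direct transcription of the state-based case and the stated $\bigO{d^8}$ complexity is immediate.
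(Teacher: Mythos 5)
Your proposal is correct and follows essentially the same route as the paper, which leaves this lemma's proof as ``analogous to Lemma~\ref{mainlemmaQMC}'': Jordan decomposition of $M_{\e}$, the identity $\e_{\psi}^{m\theta}=\e_{\phi}$ on the peripheral eigenspace (valid because $p(\e)=\lcm\setnocond{q_k}_k$ annihilates every peripheral eigenvalue and these admit no nontrivial Jordan blocks), the bound of Lemma~\ref{lem:inequality}, and the same choice of $M^{\epsilon}$ and $K^{\epsilon}=M^{\epsilon}p(\e)$. The uniform super-operator-level identity you flag as the main obstacle is exactly the detail the paper elides, and your resolution of it is sound.
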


With the above Lemma, $K^\epsilon$ can always be set as a multiple of $p(\e)$. Then we get an $\omega$-expression 
\[
    \epsneigh(\trjdy(\g)) = \setnocond{\lfdy(\mathrm{id}_\h)} \cdot \setnocond{\lfdy(\e)} \cdots \setnocond{\lfdy(\e^{K^{\epsilon}-1})} \cdot \left(\epsneigh(\e_{0}) \cdots \epsneigh(\e_{p(\e)-1})\right)^{\omega}.
\]%\

%Note that by Lemma~\ref{lem:inequality} and $\e^{m p(\e)} = \e_{\phi}$ for any positive integer $m$, following the similar arguments to $K^{\epsilon}$ for quantum Markov chains with an initial state in Lemma~\ref{mainlemmaQMC}, we can simply set the $K^{\epsilon} = M^{\epsilon} \theta$, where $M^{\epsilon}$ is the minimal integer satisfying 
%\begin{eqnarray}
%    \text{$C \mu^{M^{\epsilon} \theta} {(M^{\epsilon} \theta)}^{d_{\mu} - 1} < \epsilon$ and $M^{\epsilon}\theta  + 1 > d_{\mu}$.}
%    \label{Eq_Me}
%\end{eqnarray}
%\begin{proof}
% First, note that for any $m$, $M_{\e^{m}} = M^{m}_{\e}$. Thus $\e$ is periodically stable if and only if 
%there exists an integer $\theta > 0$ such that
%$\lim_{n \to \infty }M_{\e}^{n\theta}$ exists.
%Take the Jordan decomposition in Eq.(\ref{eq:Jordan}) of $M_{\e}$.  Then for any $m$,
%\[
%S^{-1} \cdot M_{\e}^{m} \cdot S=\bigoplus_{k} J_{k}^{m}(\lambda_{k}).
%\]
%We have two cases to consider:
%\begin{itemize}
%\item If $|\lambda_{k}|<1$, then $\lim_{m\rightarrow\infty}J_{k}^{m}(\lambda_{k})=0$ always holds. 
%\item If $|\lambda_{k}|=1$, then $N_{k}=0$, and $J_{k}(\lambda_{k})=e^{2\pi i \psi_{k}}P_{k}$ for some $\psi_{k}$. Thus $\lim_{n \to \infty}J_{k}^{n\theta}(\lambda_{k})$ exists if and only if
%$\psi_{k}$ is rational and $\theta\psi_{k}$ is an integer (i.e., $\lambda^\theta_{k}=1$)~\cite{hardy1979introduction}.
%\end{itemize}
%The lemma then follows from the observation that $\lim_{n \to \infty }M_{\e}^{n\theta}$ exists if and only if for each $k$, $\lim_{n \to \infty }J_{k}^{n\theta}(\lambda_{k})$ exists.  \hfill $\Box$
%\end{proof}

Furthermore, the approximate version of Problem~\ref{prob:quantum} can be solved.

\end{document}